\def\eps{\varepsilon}
\font\tencmmib=cmmib10 \skewchar\tencmmib '60
\def\lessim{\ \lower4pt\hbox{$
\buildrel{\displaystyle <}\over\sim$}\ }
\def\gessim{\ \lower4pt\hbox{$\buildrel{\displaystyle >}
\over\sim$}\ }
\newtheorem{theorem}{Theorem}
\newtheorem{lemma}{Lemma}
\renewcommand{\hat}{\widehat}
\renewcommand{\hat}{\widehat}
\newcommand{\bfm}[1]{\ensuremath{\mathbf{#1}}}
   \def\bA{\bfm A}  
   \def\bB{\bfm B}
   \def\bD{\bfm D}  
\def\be{\bfm e}   \def\bE{\bfm E}  \def\EE{\mathbb{E}}
   \def\bM{\bfm M}
   \def\bP{\bfm P}  \def\PP{\mathbb{P}}
     \def\RR{\mathbb{R}}
\def\bs{\bfm s}   \def\bS{\bfm S}  
\def\bu{\bfm u}   \def\bU{\bfm U}
   \def\bX{\bfm X}  
   \def\bY{\bfm Y}  
   \def\bZ{\bfm Z}  
\def\calA{{\cal  A}}
\def\calE{{\cal  E}}
\def\calM{{\cal  M}}
\def\calP{{\cal  P}}
\def\calT{{\cal  T}}
\newcommand{\bfsym}[1]{\ensuremath{\boldsymbol{#1}}}
            \def\bDelta {\bfsym {\Delta}}
 \def\beps{\bfsym \varepsilon}          \def\bepsilon{\bfsym \varepsilon}
\DeclareMathOperator{\sgn}{sgn}
\DeclareMathOperator{\Var}{Var}
\def\eps{\varepsilon}
\def\beps{\mbox{\boldmath$\eps$}}
\def\newpage{\vfill\eject}
\newcommand{\tikzmark}[1]{\tikz[overlay,remember picture] \node (#1) {};}
\newcommand*{\AddNote}[4]{%
    \begin{tikzpicture}[overlay, remember picture]
        \draw [decoration={brace,amplitude=0.5em},decorate,ultra thick,black]
            ($(#3)!(#1.north)!($(#3)-(0,1)$)$) --  
            ($(#3)!(#2.south)!($(#3)-(0,1)$)$)
                node [align=center, text width=2.5cm, pos=0.5, anchor=west] {#4};
    \end{tikzpicture}
}%
\newdimen\biblioindent    \biblioindent=30pt
\def\sgn{\mbox{sgn}}
\def\eps{\varepsilon}
\newcommand{\beq}{\begin{equation}}
  \newcommand{\eeq}{\end{equation}}
\newcommand{\beqn}{\begin{eqnarray}}
  \newcommand{\eeqn}{\end{eqnarray}}
\newcommand{\beqnn}{\begin{eqnarray*}}
  \newcommand{\eeqnn}{\end{eqnarray*}}
\def\nnz{{\rm nnz}}
\def\sr{{\rm sr}}
\newcounter{CondCounter}
\DeclarePairedDelimiter\ceil{\lceil}{\rceil}
\begin{document}

\title{Effective Tensor Sketching via Sparsification$^\ast$}
\author{Dong Xia and Ming Yuan$^\dag$\\
Columbia University}
\date{(\today)}

\maketitle

\footnotetext[1]{
This research was supported by NSF Grant DMS-1721584, and NIH Grant 1U54AI117924-01.}
\footnotetext[2]{
Address for Correspondence: Department of Statistics, Columbia University, 1255 Amsterdam Avenue, New York, NY 10027.}

\begin{abstract}
In this paper, we investigate effective sketching schemes via sparsification for high dimensional multilinear arrays or tensors. More specifically, we propose a novel tensor sparsification algorithm that retains a subset of the entries of a tensor in a judicious way, and prove that it can attain a given level of approximation accuracy in terms of tensor spectral norm with a much smaller sample complexity when compared with existing approaches. In particular, we show that for a $k$th order $d\times\cdots\times d$ cubic tensor of {\it stable rank} $r_s$, the sample size requirement for achieving a relative error $\varepsilon$ is, up to a logarithmic factor, of the order $r_s^{1/2} d^{k/2} /\varepsilon$ when $\eps$ is relatively large, and $r_s d /\varepsilon^2$ and essentially optimal when $\varepsilon$ is sufficiently small. It is especially noteworthy that the sample size requirement for achieving a high accuracy is of an order independent of $k$. To further demonstrate the utility of our techniques, we also study how higher order singular value decomposition (HOSVD) of large tensors can be efficiently approximated via sparsification. 
\end{abstract}

\newpage

\section{Introduction}
Massive datasets are being generated everyday across diverse fields and can often be formatted into matrices or higher order tensors. For example, in biomedical research, huge data matrices and tensors arise in gene expression analysis \citep[see, e.g.,][]{kluger2003spectral}, protein-to-protein interaction \citep[see, e.g.,][]{stelzl2005human}, and MRI image analysis \citep[see, e.g.,][]{smith2004advances}. They also occur frequently in statistical physics \citep[see, e.g.,][]{orus2014practical,cichocki2015tensor}, video processing \citep[see, e.g.,][]{li2010tensor, liu2013tensor}, and analyzing large graphs and social networks \citep[see, e.g.,][]{clauset2004finding, abadi2016tensorflow, scott2017social}, to name a few. As the size of these data matrices or tensors grows, it becomes costly and sometimes prohibitively expensive to store, communicate or manipulate them. This naturally brings about the task of ``sketching'': approximate the original data matrices or tensors with a more manageable amount of sketches.

In the case of data matrices, numerous sketching approaches have been proposed in recent years. See \cite{woodruff2014sketching} for a recent review. A popular idea behind many of these approaches is {\it sparsification} -- creating a sparse matrix by zeroing out some entries of the original data matrix. Sparse sketching of a large data matrix not only reduces space complexity but also allows for efficient computations. See, e.g., \cite{frieze2004fast, arora2006fast, achlioptas2007fast, drineas2011note, achlioptas2013near, krishnamurthy2013low}, among others. The main purpose of this article is to investigate to what extent sparsification can be used to effectively sketch higher order tensors. There have been some recent attempts along this direction. In particular, our work is inspired by \cite{nguyen2015tensor} who showed that for a $k$th order cubic tensor $\bA\in \RR^{d\times\cdots\times d}$, there is a randomized sparsification scheme that yields another tensor $\tilde{\bA}$ of same dimension but with
\begin{equation}
\label{eq:nsample}
\nnz(\tilde{\bA})=\tilde{O}_p\left(d^{k/2}\sr(\bA)\over \varepsilon^2\right),\qquad {\rm as\ } d\to\infty,
\end{equation}
such that
$$
\|\tilde{\bA}-\bA\|\le \varepsilon \|\bA\|.
$$
Here, $\nnz(\cdot)$ stands for the number of nonzero entries of a tensor, $\sr(\bA)=\|\bA\|_{\rm F}^2/\|\bA\|^2$ is the so-called stable rank \citep[see, e.g.,][]{achlioptas2013near, nguyen2015tensor} of a tensor $\bA$, $\|\cdot\|$ is the usual tensor spectral norm, and $\tilde{O}(\cdot)$ means $O(\cdot)$, up to a certain polynomial of logarithmic factor. Similar results have also been obtained by \cite{bhojanapalli2015new} in the case when $k=3$. On the one hand, the sample size requirement given by \eqref{eq:nsample} is satisfying because it is essentially optimal in the matrix case, that is $k=2$. See, e.g., \cite{achlioptas2013near}. On the other hand, the exponential dependence on $k$ suggests a large amount of entries still need to be retained to yield a good approximation. Our goal is to investigate if this aspect could be improved.

In particular, we propose a novel tensor sparsification algorithm that randomly retain entries from $\bA$ in a judicious way to yield a tensor $\hat{\bA}^{\rm SPA}$ such that
$$
\|\hat{\bA}^{\rm SPA}-\bA\|\le \varepsilon\|\bA\|,
$$
and
\begin{equation}
\label{eq:sample}
\nnz(\hat{\bA}^{\rm SPA})=\tilde{O}_p\left(\max\left\{{d\cdot \sr(\bA)\over \varepsilon^2},{d^{k/2}\cdot\sr(\bA)^{1/2}\over \varepsilon}\right\}\right).
\end{equation}
Here, to fix ideas, we focus on the case of cubic tensors although our results deal with more general rectangular tensors as well. This sample size requirement significantly improves those earlier ones. Especially if a high accuracy approximation is sought, that is $\varepsilon\le \sr(\bA)\cdot d^{-k/2+1}$, then our sparsification algorithm can achieve relative approximation error $\varepsilon$ in terms of tensor spectral norm by retaining as few as $\tilde{O}_p(d\cdot\sr(\bA)\cdot\varepsilon^{-2})$ entries of $\bA$, regardless of the order of the tensor. Furthermore, for larger $\varepsilon$, the number of nonzero entries we keep is smaller than $\tilde{\bA}$ by a factor of $\sr(\bA)^{1/2}\varepsilon^{-1}$.

Similar to many other sparsification algorithms, we treat different entries according to their magnitude: large entries are always kept, and moderate ones are sampled proportion to their square values. The key difference between our approach and the existing ones is in the treatment of small entries. Instead of zeroing them out as, for example, \cite{nguyen2015tensor}, we sample them in a uniform fashion, which proves to be essential for obtaining good approximation with tighter number of nonzero entries. This modification is motivated by the concentration behavior of randomly sampled tensors recently observed by \cite{yuan2016tensor, yuan2017incoherent, xia2017polynomial}.

To demonstrate the effectiveness of our tensor sketching schemes, we show how they can be used for efficient approximation of the leading singular spaces from higher order singular value decomposition (HOSVD). Let $\bU_j\in \RR^{d\times r}$ be the top $r$ left singular vectors of the flattening of $\bA$ along its $j$th mode. We show that it is possible to construct an approximation $\hat{\bU}_j$ obeying
$$
\|\hat{\bU}_j\hat{\bU}_j^\top-\bU_j\bU_j^\top\|\le \varepsilon,
$$
if we retain
$$
\tilde{O}_p\left(\max\left\{{rd\over\varepsilon^2},{rd^{k/2}\over \varepsilon}\right\}\right)
$$
carefully chosen entries
As before, we note that for high accuracy approximations, the sample complexity is essentially independent of the order of the tensor.
Although our primary focus is on higher order tensors, as a byproduct, our results indicate that our sparsification scheme improves the sample complexity of earlier approaches for approximating the singular vectors of highly rectangular matrix.

The rest of the paper is organized as follows. We first discuss the new tensor sparsification algorithm in Section~\ref{sec:SPA}
. In Section \ref{sec:hosvd} we consider the application to HOSVD.
All proofs are relegated to Section~\ref{sec:proof}.

\section{Tensor Sparsification}
\label{sec:SPA}

Sketches of a tensor $\bA\in \RR^{d_1\times \ldots \times d_k}$ are its approximations. We consider measuring the quality in terms of relative error in terms of tensor spectral norm. Recall that the spectral norm of a tensor $\bB\in \RR^{d_1\times \ldots \times d_k}$ is defined as
$$
\|\bB\|=\underset{\bu_j\in\RR^{d_j}, \|\bu_j\|_{\ell_2}\leq 1}{\sup}\ \left\langle\bB, \bu_1\otimes\ldots\otimes\bu_k\right\rangle.
$$
We seek an approximation $\hat{\bA}$ of $\bA$ such that
$$
\|\hat{\bA}-\bA\|\le \varepsilon \|\bA\|,
$$
for some $\varepsilon\in (0,1)$.

We first consider sketching a tensor by sparsification. The idea is to systematically zero out entries of $\bA$ and scale the remaining entries to yield a good approximation of $\bA$. We focus here on sparsification strategies that are carried out in an entry-by-entry fashion. Our approach can be characterized as {\it keeping} large entries, {\it sampling proportionally} moderate entries, and {\it sampling uniformly} small entries. The key is determining how to classify entries into these categories, and how to sample the moderate entries, so that the number of nonzero entries retained are as small as possible. Details are presented in Algorithm \ref{algo:sparsify}.

\begin{algorithm}
 \caption{Tensor Sparsification}\label{algo:sparsify}
  \begin{algorithmic}[2]
  \State Input: $\bA\in\RR^{d_1\times\ldots\times d_k}$, sampling budget $1\le n\le d_1\cdots d_k$.
  \State Output: $\hat{\bA}^{\rm SPA}\in\RR^{d_1\times\ldots\times d_k}$.
  \For{$i_1\in [d_1], i_2\in[d_2],\ldots, i_k\in[d_k]$}
	\If{$|A(i_1,\ldots,i_k)|\geq \|\bA\|_{\rm F}/n^{1/2}$,}\tikzmark{top2}
	\State $\hat{A}(i_1,\ldots,i_k)=A(i_1,\ldots,i_k)$.
	\EndIf   \tikzmark{bottom2}
   
	\If{$|A(i_1,\ldots,i_k)|/\|\bA\|_{\rm F}\in\Big(\frac{1}{(d_1\cdots d_k)^{1/2}}, \frac{1}{n^{1/2}}\Big)$,}\tikzmark{top3}
	\begin{equation*}
	\hat{A}(i_1,\ldots,i_k)=
	\begin{cases}
	\frac{A(i_1,\ldots,i_k)}{P(i_1,\ldots,i_k)},& \textrm{with probability\ } P(i_1,\ldots,i_k):=\frac{n A^2(i_1,\ldots,i_k)}{\|\bA\|_{\rm F}^2}\\
	0,& \textrm{with probability\ } 1-P(i_1,\ldots,i_k).
	\end{cases}
	\end{equation*}
	\EndIf   \tikzmark{bottom3}
   
         \If{$|A(i_1,\ldots,i_k)|\leq {\|\bA\|_{\rm F}}/{(d_1\ldots d_k)^{1/2}}$,}\tikzmark{top1}
         \State
         \begin{equation*}
          \hat{A}(i_1,\ldots, i_k)=
          \begin{cases}
          \frac{A(i_1,\ldots,i_k)}{P(i_1,\ldots,i_k)},& \textrm{ with probability\ } P(i_1,\ldots,i_k):=\frac{n}{d_1d_2\cdots d_k}\tikzmark{right}\\
          0,& \textrm{ with probability\ } 1-P(i_1,\ldots,i_k)
          \end{cases}
          \end{equation*}
         \EndIf\tikzmark{bottom1}
   
   \EndFor
   \State Output: $\hat\bA^{\rm SPA}=\hat\bA$.
  \end{algorithmic}
  \AddNote{top1}{bottom1}{right}{Small\\ Entries}
  \AddNote{top2}{bottom2}{right}{Large\\ Entries}
  \AddNote{top3}{bottom3}{right}{Moderate\\ Entries}
\end{algorithm}

In particular, we keep all entries whose absolute value is greater than $n^{-1/2}\|\bA\|_{\rm F}$, sample uniformly all entries whose absolute value is smaller than $(d_1\cdots d_k)^{-1/2}\|\bA\|_{\rm F}$, and sample proportional to their squared values entries whose absolute value is in-between. Here $n$ is a sampling parameter. Note that $\EE[\nnz(\hat{\bA}^{\rm SPA})]\le 2n$. And it is not hard to see, by Chernoff bound, that $\nnz(\hat{\bA}^{\rm SPA})=O_p(n)$. In other words, $n$ represents essentially the targeted sampling budget.

We note that our sparsification algorithm is similar to the one proposed earlier by \cite{nguyen2015tensor}. But the two schemes also have several key differences. The main difference between the two algorithms is their treatment of ``small'' entries. \cite{nguyen2015tensor} suggests to zero them out, while ours sample them in a uniform fashion. This is largely motivated by the concentration behavior of randomly sampled tensors observed earlier. In particular, it can be shown that a uniformly sampled tensor concentrates much sharply around its mean if its entries are sufficiently small \citep[see, e.g.,][]{yuan2016tensor}. Therefore, instead of discarding small entries, we could derive a good estimate of them by sampling uniformly. Another subtle difference between the two algorithm is in the criteria for ``small'' entries. Our criterion for ``small'' entries is that their absolute values are smaller than $(d_1\cdots d_k)^{-1/2}\|\bA\|_{\rm F}$, whereas \cite{nguyen2015tensor} treats only cubic tenors, that is $d_1=d_2=\cdots=d_k=:d$, and small entries of their scheme are those smaller than $n^{-1/2}d^{-k/4}\|\bA\|_{\rm F}\log^{k/2}d$. 

We now present the performance bounds for our sparsification algorithm.
\begin{theorem}\label{thm:sparsify}
Let $\bA\in\RR^{d_1\times\ldots\times d_k}$ and $\hat{\bA}^{\rm SPA}$ be the output from Algorithm~\ref{algo:sparsify} with sampling budget $n$. There exists a constant $C>0$ depending on $k$ only such that if for any $\alpha\geq 4\log(k\log d_{\max})$ and $\varepsilon\in(0,1)$, if
$$
n\geq C\max\left\{\alpha^4\frac{d_{\max}\cdot\sr(\bA)}{\eps^2}\log^{2}d_{\max},\ \alpha^2\frac{(d_1\cdots d_k\cdot\sr(\bA))^{1/2}}{\eps}\log^{k+4}d_{\max}\right\},
$$
then, with probability at least $1-d_{\max}^{-\alpha}$,
$$
\|\hat{\bA}^{\rm SPA}-\bA\|\leq \eps\|\bA\|,
$$
where $d_{\max}=\max\{d_1,\ldots,d_k\}$.
\end{theorem}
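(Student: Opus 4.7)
The plan is to decompose the error $\hat{\bA}^{\rm SPA}-\bA$ according to the three categories in Algorithm~\ref{algo:sparsify} and to analyze each piece with a concentration tool tailored to its structure. Entries with $|A(i_1,\ldots,i_k)|\ge \|\bA\|_F/\sqrt{n}$ are kept exactly and contribute nothing, so only the moderate and small pieces, call them $\bE_{\rm mod}$ and $\bE_{\rm small}$, need to be bounded. Each is a sum of independent mean-zero, single-entry-supported random tensors of the form $A(i)(\delta_i/P_i-1)$. A direct calculation gives per-entry variance $\sigma_i^2\le A(i)^2/P_i \le \|\bA\|_F^2/n$ in the moderate regime (using $P_i=nA(i)^2/\|\bA\|_F^2$) and $\sigma_i^2 \le A(i)^2(d_1\cdots d_k)/n$ in the small regime, while the peak magnitude $|A(i)|/P_i$ is bounded by $\|\bA\|_F\sqrt{d_1\cdots d_k}/n$ in both cases (using the lower bound on $|A(i)|$ for moderate entries and the uniform upper bound $|A(i)|\le \|\bA\|_F/\sqrt{d_1\cdots d_k}$ for small ones).

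For each piece I would evaluate $\bE_\ast(\bu_1\otimes\cdots\otimes\bu_k)=\sum_i Y_i\prod_j u_j(i_j)$ at fixed unit vectors. Using $\sum_i\prod_j u_j(i_j)^2=\prod_j\|\bu_j\|^2=1$ together with the above bounds, the conditional variance proxy is uniformly $\sigma^2\le \|\bA\|_F^2/n$ for both pieces, and each summand is bounded by $\|\bA\|_F\sqrt{d_1\cdots d_k}/n$. Bernstein's inequality combined with a $1/2$-net over $\prod_j S^{d_j-1}$ (of cardinality at most $5^{kd_{\max}}$) then yields a Gaussian-type deviation of order $\sqrt{\|\bA\|_F^2\cdot d_{\max}/n}$ and a Bernoulli-type peak contribution of order $\|\bA\|_F\sqrt{d_1\cdots d_k}/n$ times a logarithmic factor. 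Enforcing each of these to be below $\tfrac{\eps}{2}\|\bA\|=\tfrac{\eps}{2}\|\bA\|_F/\sqrt{\sr(\bA)}$ reproduces, respectively, the first and second conditions on $n$ in the theorem; a triangle inequality then combines the two pieces.

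The main obstacle is controlling the peak contribution without incurring a $d_{\max}$ factor. A naive union bound over the $1/2$-net multiplies the Bernstein peak term by $\log|N|\asymp kd_{\max}$, which inflates the required $n$ by a full polynomial factor in $d_{\max}$ and violates the target second condition. I would instead leverage the sharper tensor concentration machinery of \cite{yuan2016tensor} and \cite{xia2017polynomial}: their truncation/decoupling/iterated-Bernstein arguments exploit the uniform $\ell_\infty$ bound $\|\bA_{\rm small}\|_\infty\le \|\bA\|_F/\sqrt{d_1\cdots d_k}$ to trade the polynomial-in-$d_{\max}$ factor for the $\log^{k+4}d_{\max}$ factor that appears in the theorem. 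For the moderate piece, one first buckets entries by dyadic magnitude (at most $O(\log d_{\max})$ buckets between the small and large thresholds) and applies the same argument inside each bucket, since the importance-weighted probabilities render all buckets comparable in peak and variance; the extra bucket count is absorbed into the polylog factor. Finally, a single union bound over both pieces, all buckets, and the chaining levels delivers the stated probability $1-d_{\max}^{-\alpha}$.
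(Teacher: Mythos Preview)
Your proposal is correct and follows essentially the same route as the paper: decompose into large/moderate/small entries, note the large piece vanishes, apply the refined tensor concentration inequality of the Yuan--Wang type (which the paper packages as its Theorem~\ref{thm:conineq}) to the small piece directly, and to the moderate piece after dyadic bucketing so that within each bucket the sampling probabilities are within a constant factor of each other. The paper's proof is precisely this, with the ``sharper machinery'' you invoke made explicit as Theorem~\ref{thm:conineq}, whose proof uses symmetrization plus the balanced digitization/entropy argument rather than a crude $\tfrac12$-net, exactly to trade the polynomial-in-$d_{\max}$ peak penalty for the $\log^{k+4}d_{\max}$ factor you anticipated.
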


In the light of Theorem \ref{thm:sparsify}, we can achieve relative error $\eps$ in terms of tensor spectral norm with a sparse tensor such that
\begin{eqnarray*}
\nnz(\hat{\bA}^{\rm SPA})=
\begin{cases}
\tilde{O}\bigg(\eps^{-2}\cdot{d_{\max}\cdot\sr(\bA)}\bigg),& \textrm{ if } \varepsilon\leq {d_{\max}\cdot\sr(\bA)^{1/2}}\cdot{(d_1\ldots d_k)^{-1/2}};\\
\tilde{O}\bigg(\eps^{-1}\cdot(d_1\ldots d_k\cdot\sr(\bA))^{1/2}\bigg),& \textrm{ otherwise. }
\end{cases}
\end{eqnarray*}
This significant improves earlier work by \cite{bhojanapalli2015new} and \cite{nguyen2015tensor}. It is worth noting that for small $\eps$, or high accuracy approximation, the number of nonzero entries of $\hat{\bA}^{\rm SPA}$ is of the order $\eps^{-2}d_{\max}\cdot\sr(\bA)$, regardless of $k$. This, in particular, is known to be optimal in the matrix ($k=2$) case \citep[see, e.g.,][]{achlioptas2013near}.

The main technical tool for proving Theorem~\ref{thm:sparsify} is the following concentration inequality for random tensors which might be of independent interest.
\begin{theorem}\label{thm:conineq}
Let $\bA\in\RR^{d_1\times \ldots\times d_k}$ and $\bP\in [0,1]^{d_1\times\ldots\times d_k}$ be two fixed tensors, $\bDelta\in\{0,1\}^{d_1\times\ldots\times d_k}$ be a random tensor such that $\EE\Delta(i_1,\ldots,i_k)=P(i_1,\ldots,i_k)$. Define a random tensor $\hat{\bA}\in\RR^{d_1\times\ldots\times d_k}$ by
$$\hat{A}(i_1,\ldots,i_k)=A(i_1,\ldots,i_k)\Delta(i_1,\ldots,i_k)/P(i_1,\ldots,i_k).$$ 
Then, there exist absolute constants $C_1,C_2,C_3>0$ such that for any $\alpha>0$, with probability at least $1-3d_{\max}^{-\alpha}$,
\begin{eqnarray*}
\|\hat{\bA}-\bA\|\leq C_1\Big(\big(\sum_{j=1}^kd_j\big)^{1/2}+\alpha k\log d_{\max}\Big)\alpha_{2,\infty}(\bA,\bP)+C_2\alpha k^3\log^{k+2}(d_{\max})\sqrt{\nu}\alpha_\infty(\bA,\bP),
\end{eqnarray*}
where 
$$\nu=C_3\alpha\max\big\{\beta(\bP), k\log d_{\max}\big\},$$
$$
\beta(\bP)=\max_{j=1,\ldots,k}\ \max_{i_1,\ldots,i_{j-1},i_{j+1},\ldots,i_k} \sum_{i_{j}=1}^{d_{j}}P(i_1,\ldots,i_k),
$$
$$
\alpha_{\infty}(\bA,\bP)=\max_{i_j\in[d_j], j=1,\ldots,k}\frac{|A(i_1,\ldots,i_k)|}{P(i_1,\ldots,i_k)},
$$
and
$$
\alpha_{2,\infty}(\bA,\bP)=\max_{i_j\in[d_j], j=1,\ldots,k}\bigg(\frac{A^2(i_1,\ldots,i_k)}{P(i_1,\ldots,i_k)}\bigg)^{1/2}.
$$
Here we follow the convention that $0/0=0$.
\end{theorem}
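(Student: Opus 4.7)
The plan is to bound $\|\hat{\bA} - \bA\|$ through the variational formula
\[
\|\hat{\bA} - \bA\| = \sup_{\bu_j \in S^{d_j-1}} Z(\bu_1,\ldots,\bu_k), \qquad Z = \sum_{i_1,\ldots,i_k} \xi_{i_1,\ldots,i_k} \prod_{j=1}^k u_j(i_j),
\]
where $\xi_{i_1,\ldots,i_k} = A(i_1,\ldots,i_k)(\Delta(i_1,\ldots,i_k) - P(i_1,\ldots,i_k))/P(i_1,\ldots,i_k)$ are independent, mean-zero random variables. For any fixed product of unit vectors, the identity $\sum_{i_j} u_j(i_j)^2 = 1$ gives $\Var(Z) \le \alpha_{2,\infty}^2(\bA,\bP)$, while each summand is bounded in magnitude by $\alpha_\infty(\bA,\bP) \cdot \max_{i_1,\ldots,i_k} |\prod_j u_j(i_j)|$. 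Bernstein's inequality is thus the natural tool for each fixed choice of $\bu_j$'s.

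The first term $(\sum_j d_j)^{1/2}\alpha_{2,\infty}(\bA,\bP)$ emerges from a standard $\epsilon$-net discretization. I would take $\epsilon$-nets $\cN_j \subset S^{d_j-1}$ of cardinality $(3/\epsilon)^{d_j}$; by multilinearity of $Z$ in each $\bu_j$, the continuous supremum is at most a constant times the supremum over the product net. Union-bounded Bernstein with tail exponent matched to the total log-cardinality $O(\sum_j d_j)$ then yields the Gaussian-scale contribution with a logarithmic correction, while leaving a crude $M$-scale term of order $\alpha_\infty(\bA,\bP) \cdot \sum_j d_j \cdot \max_i |\prod_j u_j(i_j)|$ that must be refined.

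The main difficulty is replacing this crude term by $\sqrt{\nu}\,\alpha_\infty(\bA,\bP)$, up to the factor $\log^{k+2} d_{\max}$. I would handle this via two complementary devices. First, a dyadic peeling of each net by $\ell_\infty$-magnitude: partition $\cN_j$ into bands $B_{j,\ell}$ on which $\|\bu_j\|_\infty \asymp 2^{-\ell}$. Since Parseval forces at most $4^\ell$ coordinates of any such $\bu_j$ to exceed $2^{-\ell-1}$, the band $B_{j,\ell}$ admits an $\epsilon$-net of entropy $O(4^\ell \log d_j)$, while the per-summand bound inside the band sharpens to $\alpha_\infty \cdot 2^{-\sum_j \ell_j}$. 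Running Bernstein within each $k$-tuple of bands and union bounding over the $O(\log^k d_{\max})$ band combinations transfers the large entropy onto bands where $M$ is correspondingly small, producing geometrically summable contributions. Second, a preliminary slice-concentration estimate for $\bDelta$: for each mode $j$ and each fixed choice of the remaining indices, the slice sum $\sum_{i_j} \Delta(i_1,\ldots,i_k)$ has mean at most $\beta(\bP)$ and, by scalar Bernstein uniformly in the other indices, is bounded by $\nu$ with probability at least $1 - d_{\max}^{-\alpha}$. Conditioning on this slice-sparsity event caps the effective rank-one energy of $\hat{\bA} - \bA$ and yields the $\sqrt{\nu}$ multiplier in the second term.

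The chief technical hurdle is coordinating the peeling across all $k$ modes simultaneously so that the accumulated polylog factors collapse exactly to $\log^{k+2} d_{\max}$ rather than something larger, and ensuring that the Bernstein parameters in each band combination interact cleanly with the slice-sparsity event for $\bDelta$. Once this bookkeeping is in place, collecting all pieces via a final union bound should produce the claimed two-term inequality.
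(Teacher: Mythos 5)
Your overall architecture---discretize the unit balls, apply Bernstein per direction, peel by coordinate magnitude, and control the mode-$j$ slice sums of $\bDelta$ by a scalar Chernoff bound to produce the parameter $\nu$---is the same skeleton as the paper's proof, which follows \cite{yuan2016tensor}. But the step you identify as the main difficulty is exactly where your argument has a genuine gap, in two places.

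First, the entropy claim for your bands $B_{j,\ell}$ is false as stated. The set of unit vectors with $\|\bu_j\|_\infty\asymp 2^{-\ell}$ does \emph{not} admit an $\epsilon$-net of entropy $O(4^{\ell}\log d_j)$: Parseval only bounds the number of coordinates \emph{exceeding} $2^{-\ell-1}$, while such a vector may still have all $d_j$ coordinates nonzero, so the band has metric entropy $\Theta(d_j)$, not $O(4^{\ell}\log d_j)$. What actually works (and what the paper does) is to digitize \emph{each coordinate} to a power of $2^{-1/2}$, decompose each $\bu_j$ over its level sets $\calA_{b_j}=\{i_j:|u_j(i_j)|=2^{-b_j/2}\}$ with $|\calA_{b_j}|\le 2^{b_j}$, and then decompose the tensor product by the total level $s=b_1+\cdots+b_k$ via the operator $\bD_s$; the cross-terms between levels in different modes are what force the ${s+k-1\choose k-1}$-block bookkeeping and the $\log^{k}$ factors. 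Second, and more importantly, your explanation of where $\sqrt{\nu}$ comes from (``caps the effective rank-one energy'') is not the actual mechanism and does not by itself reduce the crude $M$-term. The mechanism is an \emph{entropy reduction}: on the event that every mode-$j$ slice of $\Omega$ contains at most $\nu$ sampled points, each block $\calA_{b_1}\times\cdots\times\calA_{b_k}$ intersected with $\Omega$ can be replaced by a balanced block with ${\rm Card}^2(\tilde\calA_{b_j})\le \nu\prod_{j}{\rm Card}(\tilde\calA_{b_j})$, which shrinks the log-cardinality of the relevant class of sign patterns from order $2^{q}$ down to order $\sqrt{\nu 2^{q}}\cdot\mathrm{polylog}$ (the paper's Lemma on $\mathfrak{D}_{\nu,s,q}$). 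Without this combinatorial step the union bound over your band combinations still pays entropy proportional to the full support size and the second term cannot be brought down to $\sqrt{\nu}\,\alpha_\infty(\bA,\bP)$. A further unaddressed subtlety: conditioning on the slice-sparsity event destroys the independence of the $\Delta(i_1,\ldots,i_k)$, so Bernstein cannot be applied directly afterwards; the paper circumvents this by symmetrizing with a Rademacher tensor and taking the maximum over a \emph{deterministic} class built from all sparsity patterns with $\mu_\Omega\le\nu$, rather than conditioning.
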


\section{HOSVD via Tensor Sketching}
\label{sec:hosvd}

To further illustrate the merits of the sketching schemes introduced earlier, we now consider a specific application to HOSVD, a popular technique for analyzing high dimensional tensor data. See, e.g., \cite{kolda2009tensor, sidiropoulos2017tensor} and references therein. 

For a $k$-th order tensor $\bA\in\RR^{d_1\times\ldots\times d_k}$, let $\bM_j=\calM_j(\bA)\in\RR^{d_j\times d_{-j}}$ be its $j$-th matricization where $1\leq j\leq k$, that is,
$$
\calM_j(\bA)\Big(i_j, \sum_{s=1,s\neq j}^k(i_s-1)\Big(\prod_{s'=s+1, s'\neq j}^k d_{s'}\Big)+1\Big)=A(i_1,\ldots,i_k),\quad \forall i_j\in[d_j], 1\leq j\leq k.
$$
Here $d_{-j}=(d_1\cdots d_k)/d_j$. Denote by $\bU_j^{(r_j)}$ the collection of the top $r_j$ left singular vectors of $\bM_j$. Clearly, $\bU_j^{(r_j)}$ is computable via the standard matrix singular value decomposition on $\bM_j$ whose computation complexity is $O(d_jd_1d_2\ldots d_k)$, see \cite{golub2012matrix}. Efficient computation of singular value decomposition for large matrices is an actively researched topic in numerical algebra and computational science. See \cite{berry1992large, kobayashi2001estimation, achlioptas2007fast, holmes2007fast, drineas2011note, menon2011fast}, among numerous others.

A general idea is to first obtain an approximation of $\bM_j$, say $\hat{\bM}_j\in\RR^{d_j\times {d_{-j}}}$, that is amenable for fast computation of singular value decomposition; and then approximate $\bU_j^{(r_j)}$ by the top left singular vectors of $\hat{\bM}_j$. In particular, sparsification is commonly used to yield $\hat{\bM}_j$. Denote by $\bDelta_j=\hat\bM_j-\bM_j$ and by $\hat\bU_j^{(r_j)}$ the leading $r_j$ left singular vectors of $\hat\bM_j$. By Davis-Kahan Theorem \citep{davis1970rotation}, we get
\begin{equation}\label{eq:Davis-Kahan}
\big\|\hat{\bU}_j^{(r_j)}\big(\hat{\bU}_j^{(r_j)}\big)^{\top}-\bU_j^{(r_j)}\big(\bU_j^{(r_j)}\big)^{\top}\big\|\leq \frac{2\|\bDelta_j\|}{\bar{g}_{r_j}(\bM_j)}
\end{equation}
where $\sigma_k(\cdot)$ denotes the $k$-th singular value, and
$$
\bar{g}_{r_j}(\bM_j)=\sigma_{r_j}(\bM_j)-\sigma_{r_j+1}(\bM_j),
$$
is the $r_j$-th eigengap. In particular, we can consider applying this strategy by taking $\hat{\bM}_j=\calM_j(\hat{\bA}^{\rm SPA})$. The following result characterizes its performance.

\begin{theorem}\label{thm:SVD}
Let $\bU_j^{(r_j)}$ and $\hat\bU_j^{(r_j)}$ be the top $r_j$ left singular vectors of $\calM_j(\bA)$ and $\calM_j(\hat{\bA}^{\rm SPA})$ respectively. Then there exists a constant $C>0$ depending on $k$ only such that for any $t>0$,
\begin{eqnarray*}
\big\|\hat{\bU}_j^{(r_j)}\big(\hat{\bU}_j^{(r_j)}\big)^{\top}-\bU_j^{(r_j)}\big(\bU_j^{(r_j)}\big)^{\top}\big\|\\
\leq C\frac{\|\bM_j\|_{\rm F}}{\bar{g}_{r_j}(\bM_j)}\bigg(\sqrt{\frac{d_1\ldots d_k(t+\log d_{\max})}{nd_j}}+\frac{(d_1\ldots d_k)^{1/2}(t+\log d_{\max})}{n}\bigg),
\end{eqnarray*}
with probability at least $1-e^{-t}$.
\end{theorem}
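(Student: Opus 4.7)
The plan is to reduce the singular‑vector distance to a matrix spectral–norm perturbation bound via Davis–Kahan, and then control $\|\bDelta_j\|$ directly through the matrix Bernstein inequality. Writing $\bDelta_j = \calM_j(\hat{\bA}^{\rm SPA}-\bA)$ and using that matricization is a bijective relabeling of entries, we decompose
$$
\bDelta_j = \sum_{i_1,\ldots,i_k}\xi_{i_1,\ldots,i_k}\,\be_{i_j}\be_{i_{-j}}^{\top},\qquad \xi_{i_1,\ldots,i_k}:=\hat{A}(i_1,\ldots,i_k)-A(i_1,\ldots,i_k),
$$
as a sum of independent centered rank‑one random matrices. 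Entries in the ``large'' bucket of Algorithm~\ref{algo:sparsify} have $P=1$ and therefore contribute $\xi=0$ deterministically, so only the moderate and small entries need to be analyzed.

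Next I would compute the two standard matrix Bernstein parameters. For the uniform envelope $B=\max|\xi|$: on moderate entries, $|\hat{A}|=|A|/P=\|\bA\|_{\rm F}^2/(n|A|)\le (d_1\cdots d_k)^{1/2}\|\bA\|_{\rm F}/n$ by the defining inequality $|A|\ge \|\bA\|_{\rm F}/(d_1\cdots d_k)^{1/2}$, while on small entries $|\hat{A}|\le |A|(d_1\cdots d_k)/n\le (d_1\cdots d_k)^{1/2}\|\bA\|_{\rm F}/n$ by $|A|\le \|\bA\|_{\rm F}/(d_1\cdots d_k)^{1/2}$; combining with the deterministic term $|A|$ (which is smaller in both regimes) yields $B\le 2(d_1\cdots d_k)^{1/2}\|\bM_j\|_{\rm F}/n$. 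For the variance parameter, $\mathbb{E}[\xi^2]=A^2(1-P)/P\le A^2/P$, and the crucial observation is the uniform bound $A^2/P\le \|\bA\|_{\rm F}^2/n$ on \emph{both} moderate (where $A^2/P=\|\bA\|_{\rm F}^2/n$ exactly) and small (where $A^2/P=A^2(d_1\cdots d_k)/n\le \|\bA\|_{\rm F}^2/n$ by the small‑entry threshold) entries. Summing this row‑ and column‑wise then gives $\bigl\|\sum\mathbb{E}[XX^{\top}]\bigr\|\le d_{-j}\|\bM_j\|_{\rm F}^2/n$ and $\bigl\|\sum\mathbb{E}[X^{\top}X]\bigr\|\le d_j\|\bM_j\|_{\rm F}^2/n$, so the Bernstein variance is at most $d_{-j}\|\bM_j\|_{\rm F}^2/n$ in the standard regime $d_{-j}\ge d_j$.

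Plugging these two parameters into matrix Bernstein with deviation level $t+\log d_{\max}$, absorbing $\log(d_j+d_{-j})\le k\log d_{\max}$ into the $k$‑dependent constant $C$, and substituting $d_{-j}=d_1\cdots d_k/d_j$, yields
$$
\|\bDelta_j\|\;\le\; C\|\bM_j\|_{\rm F}\sqrt{\frac{d_1\cdots d_k(t+\log d_{\max})}{n d_j}}\;+\;C\frac{(d_1\cdots d_k)^{1/2}\|\bM_j\|_{\rm F}(t+\log d_{\max})}{n}
$$
with probability at least $1-e^{-t}$. Inserting this into the Davis–Kahan bound \eqref{eq:Davis-Kahan} produces the claimed singular‑vector inequality. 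The main subtlety I expect is the variance calculation: it is precisely the uniform estimate $A^2/P\le \|\bA\|_{\rm F}^2/n$, simultaneously valid on moderate and small entries, delivered by Algorithm~\ref{algo:sparsify}'s three‑tier thresholding, that pins the variance at $d_{-j}\|\bM_j\|_{\rm F}^2/n$ and hence produces the sharp first term; a less careful treatment (e.g.\ norm‑proportional sampling of small entries) would give a strictly larger variance and weaken the bound.
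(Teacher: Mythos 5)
Your proposal is correct and follows essentially the same route as the paper: matricize the sparsification error, note that large entries ($P=1$) contribute nothing, bound the envelope by $(d_1\cdots d_k)^{1/2}\|\bA\|_{\rm F}/n$ and the per-entry variance by $A^2/P\le\|\bA\|_{\rm F}^2/n$ uniformly over the moderate and small buckets, apply matrix Bernstein, and finish with Davis--Kahan. The two Bernstein parameters you compute match the paper's exactly (using $\|\bM_j\|_{\rm F}=\|\bA\|_{\rm F}$), so there is nothing to add.
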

By Theorem~\ref{thm:SVD}, in the case when $\frac{\|\bM_j\|_{\rm F}}{\bar{g}_{r_j}(\bM_j)}=O(\sqrt{r_j})$, we can ensure
$$\big\|\hat{\bU}_j^{(r_j)}\big(\hat{\bU}_j^{(r_j)}\big)^{\top}-\bU_j^{(r_j)}\big(\bU_j^{(r_j)}\big)^{\top}\big\|\leq \varepsilon$$
by taking
\begin{equation}\label{eq:nbound1}
n\geq C\cdot\max\bigg\{\frac{r_jd_1\ldots d_k}{d_j\varepsilon^2}, \frac{(r_jd_1\ldots d_k)^{1/2}}{\varepsilon}\bigg\}\log d_{\max}.
\end{equation}
A critical fact that is neglected by this approach is that we are interested in approximating the left singular vectors of a potentially very ``fat'' matrix because $d_{-j}$ is generally much larger than $d_j$. As such, this type of approach turns out to be suboptimal for our purpose. 

Alternatively, we adopt a new spectral method similar in spirit to a recent proposal from \cite{xia2017polynomial}. More specifically, we shall approximate $\bU_j^{(r_j)}$ by the leading eigenvectors of an approximation of $\bM_j\bM_j^\top$ instead. In particular, we can run Algorithm \ref{algo:sparsify} twice to obtain two {\it independent} sparsifications of $\bA$, denoted by $\hat{\bA}^{\rm SPA}_1$ and $\hat{\bA}^{\rm SPA}_2$, and then proceed to approximate $\bM_j\bM_j^\top$ by $\calM_j(\hat{\bA}^{\rm SPA}_1)\calM_j(\hat{\bA}^{\rm SPA}_2)^\top$. Details are presented in Algorithm~\ref{algo:HOSVD_SPA}.
\begin{algorithm}
 \caption{Computing HOSVD via Tensor Sparsification}\label{algo:HOSVD_SPA}
  \begin{algorithmic}[2]
  \State Input: $\bA\in\RR^{d_1\times\ldots\times d_k}$, sampling budget $n\geq 1$.
  \State Output: the $r_j$ leading left singular vectors $\hat\bU_j^{(r_j)}$ as an estimate of HOSVD of $\calM_j(\bA)$.
  \State Run Algorithm \ref{algo:sparsify} on $\bA$ with sampling budget $n$. Denote the output by $\hat{\bA}^{\rm SPA}_1$.
  \State Run Algorithm \ref{algo:sparsify} on $\bA$ with sampling budget $n$. Denote the output by $\hat{\bA}^{\rm SPA}_2$.
  \State Compute $\hat\bU_j^{(r_j)}$ as the $r_j$ leading left singular vectors of $\calM_j(\hat{\bA}^{\rm SPA}_1)\calM_j(\hat{\bA}^{\rm SPA}_2)^\top$.
  \State Output $\hat\bU_j^{(r_j)}$.
  \end{algorithmic}
\end{algorithm}

The following theorem provides the performance bound for approximate the singular space $\bU_j^{(r_j)}$s.
\begin{theorem}\label{thm:SVD2}
Denote by $\bU_j^{(r_j)}$ the $r_j$ leading left singular vectors of $\calM_j(\bA)$. Let $\hat\bU_j^{(r_j)}$ be the output from Algorithm~\ref{algo:HOSVD_SPA}. There exists a constant $C>0$ such that for any $\alpha\geq 1$ and $\varepsilon\in(0,1)$, if
$$
n\geq C\alpha\bigg(\frac{d_j\log d_{\max}}{\varepsilon^2}\frac{\|\bA\|_{\rm F}^2\sigma_{\max}^2(\bM_j)}{\bar{g}_{r_j}^2(\bM_j\bM_j^\top)}+\frac{(d_1\ldots d_k)^{1/2}\log d_{\max}}{\varepsilon}\frac{\|\bA\|_{\rm F}^2}{\bar{g}_{r_j}(\bM_j\bM_j^\top)}\bigg),
$$
then
$$\big\|\hat\bU_j^{(r_j)}\big(\hat\bU_j^{(r_j)}\big)^{\top}-\bU_j^{(r_j)}\big(\bU_j^{(r_j)}\big)^{\top}\big\|\leq \varepsilon,$$
with probability at least $1-d_{\max}^{-\alpha}$.
\end{theorem}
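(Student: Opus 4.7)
The plan is to combine a perturbation theorem for singular subspaces with matrix Bernstein inequalities applied separately to three pieces of the perturbation. Since $\bM_j\bM_j^\top$ is symmetric positive semi-definite with leading eigenvectors $\bU_j^{(r_j)}$ and $r_j$-th eigengap $\bar g_{r_j}(\bM_j\bM_j^\top)$, a standard consequence of Wedin's theorem is that
$$
\bigl\|\hat\bU_j^{(r_j)}\bigl(\hat\bU_j^{(r_j)}\bigr)^\top - \bU_j^{(r_j)}\bigl(\bU_j^{(r_j)}\bigr)^\top\bigr\|\le \frac{C\|\bN-\bM_j\bM_j^\top\|}{\bar g_{r_j}(\bM_j\bM_j^\top)},
$$
where $\bN = \calM_j(\hat{\bA}^{\rm SPA}_1)\calM_j(\hat{\bA}^{\rm SPA}_2)^\top$, provided the perturbation is less than half the gap. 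Writing $\bE_i = \calM_j(\hat{\bA}^{\rm SPA}_i-\bA)$, each $\bE_i$ is mean-zero (Algorithm~\ref{algo:sparsify} is entrywise unbiased) and they are independent by construction, giving the decomposition
$$
\bN - \bM_j\bM_j^\top \;=\; \bE_1\bM_j^\top \;+\; \bM_j\bE_2^\top \;+\; \bE_1\bE_2^\top.
$$
It then suffices to show each of the three pieces is $O(\varepsilon\, \bar g_{r_j}(\bM_j\bM_j^\top))$ with the claimed probability.

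For the linear terms $\bE_1\bM_j^\top$ and $\bM_j\bE_2^\top$, I would apply matrix Bernstein to the sum $\bE_1\bM_j^\top=\sum_I (\hat A_1(I)-A(I))\,e_{i_j}(\bM_j\bar e_{\bar I})^\top$ indexed by entry positions $I=(i_j,\bar I)$, where $e_{i_j}$ and $\bar e_{\bar I}$ are standard basis vectors. Using the identity $(\xi_I\bM_j^\top)^\top(\xi_I\bM_j^\top)=(\hat A_1(I)-A(I))^2(\bM_j\bar e_{\bar I})(\bM_j\bar e_{\bar I})^\top$, together with the uniform bound $\Var(\hat A_1(I))\le A(I)^2/P(I)\le \|\bA\|_{\rm F}^2/n$ (easily verified case by case in the three entry regimes of Algorithm~\ref{algo:sparsify}) and the identity $\sum_{\bar I}(\bM_j\bar e_{\bar I})(\bM_j\bar e_{\bar I})^\top=\bM_j\bM_j^\top$, I obtain the variance proxy
$$
\bigg\|\sum_I \EE\bigl[(\xi_I\bM_j^\top)^\top(\xi_I\bM_j^\top)\bigr]\bigg\|\;\le\;\frac{d_j\|\bA\|_{\rm F}^2\,\sigma_{\max}^2(\bM_j)}{n}.
$$
Combined with the uniform summand bound $|\hat A_1(I)-A(I)|\le \|\bA\|_{\rm F}(d_1\cdots d_k)^{1/2}/n$ for the deterministic term in Bernstein, this yields $\|\bE_1\bM_j^\top\|\lesssim \sqrt{d_j\|\bA\|_{\rm F}^2\sigma_{\max}^2(\bM_j)\log d_{\max}/n}$ plus lower order terms, which after division by $\bar g_{r_j}(\bM_j\bM_j^\top)$ matches the first term in the required sample complexity.

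For the cross term, I would condition on $\bE_1$ and apply matrix Bernstein to $\bE_1\bE_2^\top=\sum_I (\hat A_2(I)-A(I))(\bE_1\bar e_{\bar I})e_{i_j}^\top$. An analogous variance computation gives a conditional proxy of $(d_j\|\bA\|_{\rm F}^2/n)\,\|\bE_1\|^2$. A separate application of matrix Bernstein to $\bE_1$ alone produces the high-probability bound $\|\bE_1\|\lesssim \sqrt{d_{-j}\|\bA\|_{\rm F}^2\log d_{\max}/n}$ with $d_{-j}=(d_1\cdots d_k)/d_j$, so that $d_jd_{-j}=d_1\cdots d_k$ telescopes into
$$
\|\bE_1\bE_2^\top\|\;\lesssim\;\frac{\|\bA\|_{\rm F}^2(d_1\cdots d_k)^{1/2}\log d_{\max}}{n},
$$
matching the second term of the required sample complexity after division by $\bar g_{r_j}(\bM_j\bM_j^\top)$. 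The main technical obstacle I anticipate is organizing the conditional matrix Bernstein step cleanly: the variance and almost-sure bounds must be tracked through all three entry regimes so the uniform bound $A^2/P\le \|\bA\|_{\rm F}^2/n$ is genuinely attained, and the three high-probability events for $\bE_1\bM_j^\top$, $\bM_j\bE_2^\top$, and $\bE_1\bE_2^\top$ (the last conditional on the auxiliary event $\{\|\bE_1\|\text{ is controlled}\}$) must be combined via union bound with constants absorbed into $C\alpha$ so that the overall failure probability is at most $d_{\max}^{-\alpha}$.
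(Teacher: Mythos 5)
Your proposal is correct and follows essentially the same route as the paper: the identical decomposition $\bN-\bM_j\bM_j^\top=\bE_1\bM_j^\top+\bM_j\bE_2^\top+\bE_1\bE_2^\top$, matrix Bernstein for the two linear terms with variance proxy $d_j\|\bA\|_{\rm F}^2\sigma_{\max}^2(\bM_j)/n$, a conditional matrix Bernstein for the cross term using a separate high-probability bound $\|\bE_1\|\lesssim\|\bA\|_{\rm F}\sqrt{d_{-j}\log d_{\max}/n}$ (the paper additionally tracks $\|\bE_1\|_{2,\infty}$ for the uniform summand bound, exactly the bookkeeping you flag as the main obstacle), and a final Davis--Kahan step on $\bM_j\bM_j^\top$.
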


From Theorem~\ref{thm:SVD2}, if $\frac{\|\bA\|_{\rm F}^2}{\bar{g}_{r_j}(\bM_j\bM_j^\top)}=O(r_j)$ and $\frac{\|\bA\|_{\rm F}^2\sigma_{\max}^2(\bM_j)}{\bar{g}_{r_j}^2(\bM_j\bM_j^\top)}=O(r_j)$, then the required sample complexity for sparsification is
$$
\tilde{O}_p\bigg(\frac{r_jd_j\log d_{\max}}{\varepsilon^2}+\frac{r_j(d_1\ldots d_k)^{1/2}\log d_{\max}}{\varepsilon}\bigg).
$$
It is worth noting that, even though our main focus is on higher order tensors, in the case of matrices ($k=2$) this sample complexity compares favorable with other sparsification techniques that have been developed for computing singular vectors. For example, consider computing the top $r$ left singular vectors of a $d_1\times d_2$ ($d_1\le d_2$) matrix. The approach from \cite{achlioptas2007fast} needs to sample
$$\tilde{O}_p\Big(\frac{rd_1d_2^2}{\varepsilon^2} \cdot \frac{\max_{i,j}|A(i,j)|^2}{\|\bA\|_{\rm F}^2}\Big)$$
entries; the technique of \cite{drineas2006subspace} requires
$$\tilde{O}_p\Big(\frac{rd_2}{\varepsilon^2}\Big)$$
entries. These are to be compared with Algorithm \ref{algo:HOSVD_SPA} which needs
$$\tilde{O}_p\Big(\frac{rd_1}{\varepsilon^2}+\frac{r(d_1d_2)^{1/2}}{\varepsilon}\Big)$$
sampled entries, which could be much smaller than the previous two when $d_1\ll d_2$.

\section{Proofs}\label{sec:proof}

We now present the proofs to our main results.
\subsection{Proof of Theorem~\ref{thm:sparsify}}
Theorem~\ref{thm:sparsify} follows immediately from the concentration bound for $\|\hat{\bA}^{\rm SPA}-\bA\|$ below.
\begin{lemma}\label{lemma:sparsify}
Let $\bA\in\RR^{d_1\times\ldots\times d_k}$ and $\hat{\bA}^{\rm SPA}$ be the output from Algorithm~\ref{algo:sparsify} with sampling budget $n$. Then there exist absolute constants $C_1, C_2>0$ such that, for any $\alpha\geq 4\log(k\log d_{\max})$, the following bound holds with probability at least $1-d_{\max}^{-\alpha}$:
\begin{eqnarray*}
\|\hat\bA^{\rm SPA}-\bA\|\leq C_1\alpha^2k^4\log(d_{\max})\sqrt{\frac{\|\bA\|_{\rm F}^2d_{\max}}{n}}+C_2 \alpha^2k^5\log^{k+4}(d_{\max})\frac{(d_1\ldots d_k)^{1/2}\|\bA\|_{\rm F}}{n}.
\end{eqnarray*}
\end{lemma}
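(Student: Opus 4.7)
The plan is to apply Theorem~\ref{thm:conineq} piecewise after decomposing $\bA = \bA_L + \bA_M + \bA_S$ along the three magnitude regimes of Algorithm~\ref{algo:sparsify}. Large entries are retained deterministically, so they contribute nothing to the error; by the triangle inequality it suffices to bound $\|\hat\bA_S^{\rm SPA} - \bA_S\|$ and $\|\hat\bA_M^{\rm SPA} - \bA_M\|$ separately. The small-entry piece is handled by a single direct application of Theorem~\ref{thm:conineq} with $\bP = (n/(d_1\cdots d_k)) \cdot \mathbf{1}_S$: the magnitude cap $|A|\le \|\bA\|_F/(d_1\cdots d_k)^{1/2}$ yields $\alpha_{2,\infty}\le \|\bA\|_F/\sqrt{n}$ and $\alpha_\infty\le \|\bA\|_F(d_1\cdots d_k)^{1/2}/n$, while the fiber sum $\beta\le nd_{\max}/(d_1\cdots d_k)$ is dominated by $k\log d_{\max}$ in the regime of interest, so $\nu\lesssim \alpha k\log d_{\max}$. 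Plugging in produces a bound of exactly the two-term form claimed.

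The moderate piece is the crux of the argument. A one-shot application of Theorem~\ref{thm:conineq} to $\bA_M$ with $\bP_M(i) = nA^2/\|\bA\|_F^2$ is wasteful, because $\beta(\bP_M)$ may be as large as $\min(n,d_{\max})$ and thus incurs an unwanted $\sqrt{d_{\max}}$ factor in the $\sqrt{\nu}\,\alpha_\infty$ term, preventing the desired $1/n$ rate. I resolve this by a dyadic peeling of the moderate range by magnitude: define $\tau_s = \|\bA\|_F/(2^s\sqrt{n})$ and $M_s = \{|A|\in[\tau_{s+1},\tau_s)\}$ for $s = 0,1,\ldots,S$ with $S = O(\log_2((d_1\cdots d_k)/n))$, and apply Theorem~\ref{thm:conineq} to each $\bA_{M_s}$. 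Two features of the proportional-to-square rule are crucial: $A^2/P\equiv \|\bA\|_F^2/n$ is exactly constant, so $\alpha_{2,\infty}^{(s)} \equiv \|\bA\|_F/\sqrt{n}$ across every layer; while $|A|/P$ and the fiber sum of $P$ move in opposite geometric directions, with $\alpha_\infty^{(s)} \lesssim 2^s\|\bA\|_F/\sqrt{n}$ growing and $\beta^{(s)} \lesssim \min(d_{\max}/4^s,\, n)$ decaying (the two bounds on $\beta^{(s)}$ coming from $P\le 4^{-s}$ per entry and from the cardinality estimate $|M_s|\le 4^{s+1}n$). Taking the maximum of $\sqrt{\nu^{(s)}}\,\alpha_\infty^{(s)}$ over $s$, the contribution at the intermediate scales is $O(\sqrt{\alpha d_{\max}/n}\,\|\bA\|_F)$, which fits into the first-term scaling of the lemma, while at the deepest layers where $\beta^{(s)}\ll k\log d_{\max}$ it becomes $O(\sqrt{\alpha k\log d_{\max}}\,\|\bA\|_F(d_1\cdots d_k)^{1/2}/n)$, matching the second-term scaling.

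To finish, sum the per-layer bounds via the triangle inequality and union-bound the $O(k\log d_{\max})$ failure events coming from Theorem~\ref{thm:conineq}; the hypothesis $\alpha\ge 4\log(k\log d_{\max})$ is precisely what is needed to keep the overall failure probability within $d_{\max}^{-\alpha}$ up to constants, and the polylogarithmic-in-$d_{\max}$ and polynomial-in-$k$ overhead from these steps can be absorbed into the $\alpha^2 k^4\log(d_{\max})$ and $\alpha^2 k^5\log^{k+4}(d_{\max})$ prefactors recorded in the lemma. The main obstacle throughout is the moderate regime: recognizing that proportional-to-square sampling yields a scale-invariant $\alpha_{2,\infty}^{(s)}$, and then exploiting the complementary geometric behaviors of $\alpha_\infty^{(s)}$ and $\beta^{(s)}$ via dyadic peeling to extract the $1/n$ (rather than $1/\sqrt{n}$) rate in the second term, is the essential idea; once this is in place the remaining work is routine bookkeeping of the $k$ and $\log d_{\max}$ factors.
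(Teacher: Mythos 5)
Your proposal is correct and follows essentially the same route as the paper: the same three-way split by entry magnitude, a single application of Theorem~\ref{thm:conineq} to the uniformly sampled small entries, a dyadic peeling of the moderate range (the paper slices $|A|^2$ by factors of $2$ rather than $|A|$, which is immaterial) exploiting exactly the two facts you identify --- that $A^2/P\equiv\|\bA\|_{\rm F}^2/n$ is scale-invariant while $\sqrt{\beta^{(s)}}\,\alpha_\infty^{(s)}$ stays $O(\sqrt{d_{\max}/n}\,\|\bA\|_{\rm F})$ because $P$ oscillates by a bounded factor within each slice --- and a final union bound over the $O(k\log d_{\max})$ slices absorbed by the hypothesis $\alpha\geq 4\log(k\log d_{\max})$. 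No substantive differences to report.
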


\begin{proof}[Proof of Lemma~\ref{lemma:sparsify}]
Given $\bA$,  we define the disjoint subsets of $[d_1]\times\ldots\times [d_k]$
$$
\Omega_1=\big\{(i_1,\ldots,i_k): |A(i_1,\ldots,i_k)|\leq \|\bA\|_{\rm F}/(d_1\ldots d_k)^{1/2}\big\},
$$
$$
\Omega_2=\bigg\{(i_1,\ldots,i_k): |A(i_1,\ldots,i_k)|/\|\bA\|_{\rm F}\in\Big(\frac{1}{(d_1\ldots d_k)^{1/2}}, \frac{1}{n^{1/2}}\Big)\bigg\},
$$
and
$$
\Omega_3=\big\{(i_1,\ldots,i_k): |A(i_1,\ldots,i_k)|\geq \|\bA\|_{\rm F}/n^{1/2}\big\}.
$$
Note that $\Omega_1,\Omega_2,\Omega_3$ are non-random subsets for given $\bA$.
Then, 
\begin{eqnarray*}
\|\hat{\bA}^{\rm SPA}-\bA\|\leq \|\hat{\bA}^{\rm SPA}_{\Omega_1}-\bA_{\Omega_1}\|+\|\hat{\bA}^{\rm SPA}_{\Omega_2}-\bA_{\Omega_2}\|+\|\hat{\bA}^{\rm SPA}_{\Omega_3}-\bA_{\Omega_3}\|.
\end{eqnarray*}
By definition of $\hat{\bA}^{\rm SPA}$ in Algorithm~\ref{algo:sparsify}, we have $\|\hat{\bA}^{\rm SPA}_{\Omega_3}-\bA_{\Omega_3}\|=0$ so that it suffices to bound $\|\hat{\bA}^{\rm SPA}_{\Omega_1}-\bA_{\Omega_1}\|$ and $\|\hat{\bA}^{\rm SPA}_{\Omega_2}-\bA_{\Omega_2}\|$.

\paragraph*{Upper bound of $\|\hat{\bA}^{\rm SPA}_{\Omega_1}-\bA_{\Omega_1}\|$.} In order to apply Theorem~\ref{thm:conineq}, we introduce auxiliary tensors $\bB$ and $\widetilde\bB$ such that $\bB_{\Omega_1}=\bA_{\Omega_1}$ and $\bB_{\Omega_1^{\dagger}}={\bf 0}$, where $\Omega_1^{\dagger}$ denotes the complement of $\Omega_1$. Define a tensor $\bP\in[0,1]^{d_1\times\ldots\times d_k}$ such that 
\begin{eqnarray*}
P(i_1,\ldots,i_k)=
\begin{cases}
\frac{n}{d_1\ldots d_k},& \textrm{if } (i_1,\ldots,i_k)\in\Omega_1\\
0,& \textrm{otherwise}.
\end{cases}
\end{eqnarray*}
Then, random tensor $\widetilde\bB$ is defined as
\begin{eqnarray*}
\widetilde B(i_1,\ldots,i_k)=
\begin{cases}
\frac{B(i_1,\ldots,i_k)}{P(i_1,\ldots,i_k)},& \textrm{with probability } P(i_1,\ldots,i_k)\\
0,& \textrm{with probability } 1-P(i_1,\ldots,i_k),
\end{cases}
\end{eqnarray*}
where we followed the convention $0/0=0$. Clearly, $\widetilde\bB-\bB$ has the same distribution as $\hat{\bA}^{\rm SPA}_{\Omega_1}-\bA_{\Omega_1}$. To apply Theorem~\ref{thm:conineq}, we observe that
$$
\nu=C_3t\max\bigg\{\frac{nd_{\max}}{d_1\ldots d_k}, k\log d_{\max}\bigg\}
$$
and 
$$
\alpha_\infty(\bB,\bP)=\max_{i_1,\ldots,i_k}\frac{|B(i_1,\ldots,i_k)|}{P(i_1,\ldots,i_k)}=\max_{i_1,\ldots,i_k}\frac{d_1\ldots d_k}{n}|B(i_1,\ldots,i_k)|\leq \frac{(d_1\ldots d_k)^{1/2}}{n}\|\bA\|_{\rm F}
$$
and
\begin{eqnarray*}
\alpha_{2,\infty}(\bB,\bP)=\max_{i_1,\ldots,i_k} \frac{|B(i_1,\ldots,i_k)|}{\sqrt{P(i_1,\ldots,i_k)}}=\max_{i_1,\ldots,i_k}\frac{(d_1\ldots d_k)^{1/2}|B(i_1,\ldots,i_k)|}{n^{1/2}}
\leq \frac{\|\bA\|_{\rm F}}{n^{1/2}}.
\end{eqnarray*}
By Theorem~\ref{thm:conineq}, with probability at least $1-d_{\max}^{-t}$, 
\begin{eqnarray*}
\|\hat{\bA}^{\rm SPA}_{\Omega_1}-\bA_{\Omega_1}\|=\|\widetilde\bB-\bB\|\leq C_1tk^3\sqrt{\frac{d_{\max}}{n}}\|\bA\|_{\rm F}+C_2tk^4\log^{k+3}(d_{\max})\frac{(d_1\ldots d_k)^{1/2}\|\bA\|_{\rm F}}{n}.
\end{eqnarray*}

\paragraph*{Upper bound of $\|\hat{\bA}^{\rm SPA}_{\Omega_2}-\bA_{\Omega_2}\|$.} Bounding $\|\hat{\bA}^{\rm SPA}_{\Omega_2}-\bA_{\Omega_2}\|$ is more involved. For $s=1,2,\ldots,\ceil{\log(d_1\ldots d_k/n)}$, define
$$
\Omega_{2,s}=\bigg\{(i_1,\ldots,i_k): |A(i_1,\ldots,i_k)|^2\in\Big[\frac{\|\bA\|_{\rm F}^2}{n}2^{-s}, \frac{\|\bA\|_{\rm F}^2}{n}2^{-s+1}\Big)\bigg\}.
$$
Clearly,
$$\Omega_2=\bigcup_{s=1}^{\ceil{\log(d_1\ldots d_k/n)}}\Omega_{2,s},$$ 
so that
$$
\|\hat{\bA}^{\rm SPA}_{\Omega_2}-\bA_{\Omega_2}\|\leq \sum_{s=1}^{\ceil{\log(d_1\ldots d_k/n)}}\big\|\hat{\bA}^{\rm SPA}_{\Omega_{2,s}}-\bA_{\Omega_{2,s}}\big\|.
$$
We now apply Theorem~\ref{thm:conineq} to bound each term on the righthand side. We follow the same strategy as before and define auxiliary tensors $\widetilde\bB_s$ and $\bB_s$ such that $\big(\bB_s\big)_{\Omega_{2,s}}=\bA_{\Omega_{2,s}}$ and $\big(\bB_s\big)_{\Omega_{2,s}^{\dagger}}={\bf 0}$.  The probability tensor $\bP_s$ is defined as
\begin{eqnarray*}
P_s(i_1,\ldots,i_k)=
\begin{cases}
\frac{n A^2(i_1,\ldots,i_k)}{\|\bA\|_{\rm F}^2},& \textrm{if } (i_1,\ldots,i_k)\in\Omega_{2,s}\\
0,& {\rm otherwise}.
\end{cases}
\end{eqnarray*}
The random tensor $\widetilde\bB_s$ is defined as 
\begin{eqnarray*}
\widetilde B_s(i_1,\ldots,i_k)=
\begin{cases}
\frac{B_s(i_1,\ldots,i_k)}{P_s(i_1,\ldots,i_k)},& \textrm{with probability } P_s(i_1,\ldots,i_k)\\
0,& \textrm{with probability } 1-P_s(i_1,\ldots,i_k).
\end{cases}
\end{eqnarray*}
Clearly, $\hat{\bA}^{\rm SPA}_{\Omega_{2,s}}-\bA_{\Omega_{2,s}}$ has the same distribution as $\widetilde\bB_s-\bB_s$. To apply Theorem~\ref{thm:conineq}, observe that
\begin{eqnarray*}
\alpha_{2,\infty}(\bB_s,\bP_s)=\max_{(i_1,\ldots,i_k)\in \Omega_{2,s}}\sqrt{\frac{B_s^2(i_1,\ldots,i_k)}{P_s(i_1,\ldots,i_k)}}=\max_{(i_1,\ldots,i_k)\in\Omega_{2,s}}\sqrt{\frac{A^2(i_1,\ldots,i_k)}{P(i_1,\ldots,i_k)}}
=\sqrt{\frac{\|\bA\|_{\rm F}^2}{n}}.
\end{eqnarray*}
Since
$$
\nu=C_1t\max\bigg\{\max_{j\in[k]}\max_{i_1,\ldots,i_{j-1},i_{j+1},\ldots,i_k}\sum_{i_j: (i_1,\ldots,i_j)\in\Omega_{2,s}} P(i_1,\ldots,i_k), k\log d_{\max}\bigg\},
$$
we obtain
\begin{eqnarray*}
\sqrt{\nu}\alpha_{\infty}(\bB_s,\bP_s)\leq C_1t^{1/2}k^{1/2}\log^{1/2}(d_{\max})\max_{(i_1,\ldots,i_k)\in\Omega_{2,s}}\frac{|A(i_1,\ldots,i_k)|}{P(i_1,\ldots,i_k)}\\
+C_1t^{1/2}\bigg(\max_{j\in[k]}\max_{i_1,\ldots,i_{j-1},i_{j+1},\ldots,i_k}\sqrt{\sum_{i_j:(i_1,\ldots,i_k)\in\Omega_{2,s}}P(i_1,\ldots,i_k)}\bigg)\max_{(i_1,\ldots,i_k)\in\Omega_{2,s}}\frac{|A(i_1,\ldots,i_k)|}{P(i_1,\ldots,i_k)}.
\end{eqnarray*}
By definition of $\Omega_{2,s}$, we have
$$
\frac{\max_{(i_1,\ldots,i_k)\in \Omega_{2,s}} P(i_1,\ldots,i_k)}{\min_{(i_1,\ldots,i_k)\in \Omega_{2,s}} P(i_1,\ldots,i_k)}\leq 2.
$$
Therefore, 
\begin{eqnarray*}
&&\bigg(\max_{j\in[k]}\max_{i_1,\ldots,i_{j-1},i_{j+1},\ldots,i_k}\sqrt{\sum_{i_j:(i_1,\ldots,i_k)\in\Omega_{2,s}}P(i_1,\ldots,i_k)}\bigg)\max_{(i_1,\ldots,i_k)\in\Omega_{2,s}}\frac{|A(i_1,\ldots,i_k)|}{P(i_1,\ldots,i_k)}\\
&\leq& \sqrt{2d_{\max}}\max_{(i_1,\ldots,i_k)\in\Omega_{2,s}}\frac{|A(i_1,\ldots,i_k)|}{\sqrt{P(i_1,\ldots,i_k)}}.
\end{eqnarray*}
By the fact $P(i_1,\ldots,i_k)=\frac{n A^2(i_1,\ldots,i_k)}{\|\bA\|_{\rm F}^2}$ and $|A(i_1,\ldots,i_k)|\geq \|\bA\|_{\rm F}/(d_1\ldots d_k)^{1/2}$, we get
\begin{eqnarray*}
\sqrt{\nu}\alpha_{\infty}(\bB_s,\bP_s)&\leq& C_1k^{1/2}t^{1/2}\log^{1/2}(d_{\max})\max_{(i_1,\ldots,i_k)\in \Omega_{2,s}}\frac{\|\bA\|_{\rm F}^2}{n |A(i_1,\ldots,i_k)|}+C_2t^{1/2}d_{\max}^{1/2}\sqrt{\frac{\|\bA\|_{\rm F}^2}{n}}\\
&\leq& C_1k^{1/2}t^{1/2}\log^{1/2}(d_{\max})\frac{(d_1\ldots d_k)^{1/2}\|\bA\|_{\rm F}}{n}+C_2t^{1/2}\sqrt{\frac{\|\bA\|_{\rm F}^2d_{\max}}{n}}.
\end{eqnarray*}
By Theorem~\ref{thm:conineq}, with probability at least $1-d_{\max}^{-t}$,
\begin{eqnarray*}
\|\hat{\bA}^{\rm SPA}_{\Omega_{2,s}}-\bA_{\Omega_{2,s}}\|\leq C_1t^2k^3\sqrt{\frac{\|\bA\|_{\rm F}^2d_{\max}}{n}}+C_2 t^2k^4\log^{k+3}(d_{\max})\frac{(d_1\ldots d_k)^{1/2}\|\bA\|_{\rm F}}{n}.
\end{eqnarray*}
By taking a uniform bound for all $s=1,2,\ldots,\ceil{\log(d_1\ldots d_k/n)}$, we conclude that with probability at least $1-k\log(d_{\max})d_{\max}^{-t}$, 
$$
\|\hat{\bA}^{\rm SPA}_{\Omega_{2}}-\bA_{\Omega_{2}}\|\leq C_1t^2k^4\log(d_{\max})\sqrt{\frac{\|\bA\|_{\rm F}^2d_{\max}}{n}}+C_2 t^2k^5\log^{k+4}(d_{\max})\frac{(d_1\ldots d_k)^{1/2}\|\bA\|_{\rm F}}{n}.
$$
\paragraph*{Finalize the proof of Lemma~\ref{lemma:sparsify}.} Put the above bounds together, we end up with, for any $t>1$, 
\begin{eqnarray*}
\|\hat{\bA}^{\rm SPA}-\bA\|\leq C_1t^2k^4\log(d_{\max})\sqrt{\frac{\|\bA\|_{\rm F}^2d_{\max}}{n}}+C_2 t^2k^5\log^{k+4}(d_{\max})\frac{(d_1\ldots d_k)^{1/2}\|\bA\|_{\rm F}}{n}
\end{eqnarray*}
which holds with probability at least $1-\big(1+k\log d_{\max} \big)d_{\max}^{-t}=1-d_{\max}^{-t+\log(k\log d_{\max})}$.
\end{proof}

\subsection{Proof of Theorem~\ref{thm:conineq}}
We begin with symmetrization \citep[see, e.g.,][]{yuan2016tensor} and obtain for any $t>0$,
$$
\PP\Big(\|\hat{\bA}-\bA\|\geq t\Big)\leq 4\PP\Big(\|\bepsilon\odot\hat{\bA}\|\geq 2t\Big)+4\exp\Big(\frac{-t^2/2}{\alpha_{2,\infty}^2(\bA,\bP)+t\alpha_{\infty}(\bA,\bP)/3}\Big)
$$
where $\beps\in\RR^{d_1\times \ldots\times d_k}$ is a random tensor with i.i.d. Rademacher entries, and
$$
\alpha_{\infty}(\bA,\bP)=\max_{i_j\in[d_j], j=1,\ldots,k}\frac{A(i_1,\ldots,i_k)}{P(i_1,\ldots,i_k)}
$$
and
$$
\alpha_{2,\infty}(\bA,\bP)=\max_{i_j\in[d_j], j\in[k]}\bigg(\frac{A^2(i_1,\ldots,i_k)}{P(i_1,\ldots,i_k)}\bigg)^{1/2}.
$$
The $\odot$ operator stands for entrywse multiplication, that is
$$\big(\eps\odot\hat A\big)(i_1,\ldots,i_k)=\eps(i_1,\ldots,i_k)\hat{A}(i_1,\ldots,i_k).$$
By definition, the operator norm $\|\beps\odot\hat\bA\|$ is given by
$$
\|\beps\odot\hat\bA\|=\underset{\bu_j\in\RR^{d_j}, \|\bu_j\|_{\ell_2}\leq 1, 1\leq j\leq k}{\sup}\ \big<\beps\odot\hat{\bA}, \bu_1\otimes\ldots\otimes \bu_k\big>.
$$
We begin with the discretization of $\ell_2$-norm balls. For each $j=1,\ldots,k$, define
$$
\mathfrak{B}_{m_j,d_j}=\big\{0,\pm1,\pm2^{-1/2},\ldots,\pm2^{-m_j/2}\big\}^{d_j}\bigcap \big\{\bu\in\RR^{d_j}: \|\bu\|_{\ell_2}\leq 1\big\}
$$
where $m_j=2\big(\ceil{\log_2d_j}+3\big)$. 
Define the ``digitalization" operator $\bD_s$ which zeros out the entries of $\bA$ whose absolute value is not $2^{-s/2}$. Then,
$$
\bD_s(\bA)=\sum_{i_1,\ldots,i_k} {\bf 1}\big\{\big|\langle\bA, \be_{i_1}\otimes\ldots\otimes\be_{i_k}\rangle\big|=2^{-s/2}\big\}A(i_1,\ldots,i_k)\be_{i_1}\otimes\ldots\otimes\be_{i_k}
$$
where we denote by $\be_{i_j}$ the canonical basis vectors in $\RR^{d_j}$. Clearly, for all $\bu_j\in\mathfrak{B}_{m_j,d_j}$, 
$$
\big<\bu_1\otimes\ldots\otimes \bu_k, \beps\odot\hat\bA\big>=\sum_{s=1}^{m_1+\ldots+m_k}\big<\bD_s\big(\bu_1\otimes\ldots\otimes \bu_k\big),\beps\odot\hat\bA\big>.
$$
For a subset $\calT\subset [d_1]\times\ldots\times[d_k]$, the aspect ratio $\mu_\calT$ is defined by
$$ 
\mu_\calT:=\max_{\ell=1,\ldots,k}\max_{i_j: j\in [k]\setminus \ell}\  {\rm Card}\big(\big\{i_\ell: (i_1,\ldots,i_k)\in\calT\big\}\big).
$$
Define the sampling locations
$$
\Omega=\big\{(i_1,\ldots,i_k): \Delta(i_1,\ldots,i_k)=1\big\}
$$
and the associated sampling operator 
$$
\calP_\Omega(\bA)=\sum_{i_1,\ldots,i_k}{\bf 1}\big((i_1,\ldots,i_k)\in\Omega\big)A(i_1,\ldots,i_k)\be_{i_1}\otimes\ldots\otimes\be_{i_k}.
$$

We shall now make use of the following version of the Chernoff bound:
\begin{lemma}\label{lemma:chernoff}
Let $X_1,\ldots,X_n$ be independent binary random variables such that $\PP(X_j=1)=p_j\in[0,1], j=1,\ldots,n$. Then, for any $t\geq 0$, 
$$
\PP\bigg(\sum_{j=1}^n\big(X_j-p_j\big)\geq 2t\sqrt{\sum_{j=1}^np_j(1-p_j)}\bigg)\leq e^{-t^2}.
$$
\end{lemma}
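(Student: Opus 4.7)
The plan is to apply the classical Chernoff--Cram\'er exponential-moment method. First I would center the summands by setting $Y_j = X_j - p_j$, so that each $Y_j$ is mean-zero, $|Y_j|\le 1$ almost surely, and $\EE Y_j^2 = p_j(1-p_j)$. Writing $S=\sum_{j=1}^n Y_j$ and $V=\sum_{j=1}^n p_j(1-p_j)$, Markov's inequality applied to $e^{\lambda S}$ together with independence reduces the task to controlling individual moment generating functions: for any $\lambda>0$,
\[
\PP(S\ge s)\;\le\;e^{-\lambda s}\prod_{j=1}^n\EE\bigl[e^{\lambda Y_j}\bigr].
\]

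The key intermediate step, and the only nontrivial one, is the Bernstein-type MGF bound
\[
\log\EE\bigl[e^{\lambda Y_j}\bigr]\;\le\;\frac{\lambda^2\,p_j(1-p_j)/2}{1-\lambda/3},\qquad 0\le\lambda<3.
\]
I would derive it by Taylor-expanding $e^{\lambda Y_j}$ and using $|\EE Y_j^k|\le\EE Y_j^2=p_j(1-p_j)$ for $k\ge 2$ (which holds because $|Y_j|\le 1$ forces $|Y_j|^k\le Y_j^2$), giving $\EE[e^{\lambda Y_j}]\le 1+p_j(1-p_j)(e^\lambda-1-\lambda)$; combining $\log(1+x)\le x$ with the elementary series bound $e^\lambda-1-\lambda\le(\lambda^2/2)/(1-\lambda/3)$ on $[0,3)$ then yields the claim. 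Summing over the independent coordinates produces $\log\EE[e^{\lambda S}]\le(\lambda^2 V/2)/(1-\lambda/3)$.

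Finally I would set $s=2t\sqrt V$ and pick the near-optimal $\lambda=t/\sqrt V$. The Chernoff exponent simplifies to
\[
-\lambda s+\frac{\lambda^2 V/2}{1-\lambda/3}\;=\;-2t^2+\frac{t^2/2}{1-t/(3\sqrt V)},
\]
which is at most $-t^2$ exactly when $t\le(3/2)\sqrt V$, delivering the advertised bound in the sub-Gaussian regime. For $t>(3/2)\sqrt V$ one would either sharpen the MGF inequality to Bennett's form based on $\phi(u)=(1+u)\log(1+u)-u$, or simply observe that $S\le\sum_j(1-p_j)\le n$ almost surely, so $\PP(S\ge 2t\sqrt V)=0$ once the threshold exceeds the essential support of $S$.

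I expect the only real, though modest, obstacle to be the MGF estimate in the middle step; once that is in hand, the rest is routine exponent optimization. The specific factor $2t$ in the quantile has been chosen precisely so that the optimized exponent comes out to the clean value $t^2$, which is what drives the union-bounds over discretized test tensors in the proof of Theorem~\ref{thm:conineq}.
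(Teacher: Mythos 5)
Your proposal is the same Chernoff--Cram\'er route the paper takes; the difference is only in how the moment generating function is bounded. The paper applies $e^x\le 1+x+x^2$ (valid on $[-1,1]$) to the two-point MGF to get $\EE e^{\lambda(X_j-p_j)}\le e^{\lambda^2p_j(1-p_j)}$ and then optimizes $\lambda=t/(2V)$ with $V=\sum_jp_j(1-p_j)$, whereas you use the Bernstein-form bound $\log\EE e^{\lambda Y_j}\le\tfrac{\lambda^2p_j(1-p_j)/2}{1-\lambda/3}$. Your computation in the regime $t\le\tfrac32\sqrt V$ is correct and does deliver $e^{-t^2}$, and you are right that this is where the real content lies.

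The genuine gap is the complementary regime $t>\tfrac32\sqrt V$, and neither of your two patches closes it. Bennett's inequality gives exponent $-V\,h(2t/\sqrt V)$ with $h(u)=(1+u)\log(1+u)-u$, and the required inequality $h(u)\ge u^2/4$ fails once $u\gtrsim4.5$ (e.g.\ $h(5)=6\log6-5\approx5.75<6.25$). The almost-sure bound $S\le\sum_j(1-p_j)$ only forces the probability to vanish when $2t\sqrt V>\sum_j(1-p_j)$, and since $V=\sum_jp_j(1-p_j)$ can be much smaller than $\sum_j(1-p_j)$ when the $p_j$ are small, this leaves an uncovered window well above $\tfrac32\sqrt V$. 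In that window the stated inequality is actually \emph{false}: take $n=1$, $p_1=0.01$, so $\sqrt V\approx0.0995$; with $t=4$ the event $\{X_1-p_1\ge2t\sqrt V\approx0.80\}$ is $\{X_1=1\}$, which has probability $0.01\gg e^{-16}$. So no patch can succeed without weakening the conclusion to a Bernstein/Bennett form such as $\exp\big(-c\min\{t^2,\,t\sqrt V\}\big)$ or restricting the range of $t$. You should be aware that the paper's own proof has exactly the same blind spot: it invokes $e^x\le1+x+x^2$ with $x=\lambda(1-p_j)$ while choosing $\lambda=t/(2V)$, which need not lie in $[-1,1]$. For the one application in the paper (controlling $\mu_\Omega$), the multiplicative Chernoff bound $\PP(\sum_jX_j\ge a)\le(e\mu/a)^a$ with $\mu=\sum_jp_j$ is the statement that is actually needed and actually true.
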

Lemma~\ref{lemma:chernoff} is fairly standard and we include its proof in the Appendix for completeness.

By Lemma~\ref{lemma:chernoff}, there exists an absolute constant $C>0$ such that for all $\alpha\geq 1$,
$$
\PP\Big(\mu_\Omega\geq C\alpha\max\Big\{\beta(\bP),k\log d_{\max}\Big\}\Big)\leq d_{\max}^{-\alpha}
$$
where 
$$
\beta(\bP)=\max_{j=1,\ldots,k}\ \max_{i_1,\ldots,i_{j-1}, i_{j+1},\ldots,i_k}\sum_{i_j=1}^{d_j} P(i_1,\ldots,i_k)
$$
and $d_{\max}:=\max_{1\leq j\leq k}d_j$. 
Denote the above event by $\calE_1$ with $\PP(\calE_1)\geq 1-d_{\max}^{-\alpha}$. The rest of our analysis is conditioned on event $\calE_1$. Observe that 
$$
\big<\bu_1\otimes\ldots\otimes\bu_k, \beps\odot\hat{\bA}\big>=\sum_{s=1}^{m_1+\ldots+m_k}\big<\calP_\Omega\big(\bD_s(\bu_1\otimes\ldots\otimes \bu_k)\big), \beps\odot \hat{\bA}\big>.
$$
For $\bu_j\in\mathfrak{B}_{m_j,d_j}$, let $\calA_{b_j}=\big\{i_j: \big|u_j(i_j)\big|=2^{-b_j/2}\big\}$ for $j=1,\ldots,k$. Then, we write
$$
\bD_s(\bu_1\otimes\ldots\otimes \bu_k)=\sum_{(b_1,\ldots,b_k): b_1+\ldots+b_k=s} \calP_{\calA_{b_1}\times\ldots\times\calA_{b_k}}\bD_s\big(\bu_1\otimes\ldots\otimes\bu_k\big).
$$
By definition of $\mu_\Omega$, on event $\calE_1$, there exist $\tilde{\calA}_{b_1}\subset \calA_{b_1}, \ldots, \tilde{\calA}_{b_k}\subset \calA_{b_k}$ such that 
$$
\big(\calA_{b_1}\times\ldots\times\calA_{b_k}\big)\cap \Omega=\big(\tilde{\calA}_{b_1}\otimes \ldots\otimes \tilde{\calA}_{b_k}\big)\cap \Omega
$$
and
$$
{\rm Card}^2(\tilde{\calA}_{b_j})\leq \mu_{\Omega}\prod_{j=1}^k{\rm Card}(\tilde{\calA}_{b_j}),\quad j=1,2,\ldots,k.
$$
We conclude with
$$
\big<\bD_s(\bu_1\otimes\ldots\otimes\bu_k), \beps\odot\hat{\bA}\big>=\sum_{s=1}^{m_1+\ldots+m_k}\sum_{b_1+\ldots+b_s=s}\big<\calP_{\tilde{\calA}_{b_1}\times\ldots\times\tilde\calA_{b_k}}\bD_s(\bu_1\otimes\ldots\otimes\bu_k),\beps\odot\hat{\bA}\big>.
$$
Given $\Omega$, we define the balanced version of digitalization operator
$$
\widetilde{\bD}_s(\bu_1\otimes\ldots\otimes\bu_k)=\sum_{(b_1,\ldots,b_k):b_1+\ldots+b_k=s}\calP_{\tilde{\calA}_{b_1}\times\ldots\times\tilde{\calA}_{b_k}}\bD_s\big(\bu_1\otimes\ldots\otimes\bu_k\big)
$$
where $\tilde\calA_j$ are defined as above. Then, $\calP_\Omega\bD_s(\bu_1\otimes\ldots\bu_k)=\calP_\Omega\widetilde\bD_s(\bu_1\otimes\ldots\bu_k)$. Given $\Omega$, define
$$
\mathfrak{B}_{\Omega,m_{\star}}:=\Big\{\sum_{0\leq s\leq m_{\star}}\widetilde{\bD}_s(\bu_1\otimes\ldots\bu_k)+\sum_{m_{\star}<s\leq m^{\star}}\bD_s(\bu_1\otimes\ldots\otimes\bu_k): \bu_j\in\mathfrak{B}_{m_j,d_j}, j=1,\ldots,k\Big\}
$$
for any $0<m_{\star}\leq m^{\star}\leq \sum_{j=1}^km_j$. Conditioned on $\calE_1$, we shall focus on $\{\Omega: \mu_\Omega\leq \nu\}$ where $\nu=C\alpha\max\big\{\beta(\bP),k\log d_{\max}\big\}$. Denote $\mathfrak{B}_{\nu,m_{\star}}^{\star}=\bigcup_{\mu_\Omega\leq \nu}\mathfrak{B}_{\Omega,m_{\star}}^{\star}$. Following an identical argument as that in \cite{yuan2016tensor}, we get
$$
\big\|\beps\odot\hat\bA \big\|\leq 2^k \max_{\bY\in\mathfrak{B}^{\star}_{\nu,m_{\star}}}\langle\bY, \beps\odot\hat\bA\rangle.
$$
The entropy number of $\mathfrak{B}^{\star}_{\nu,m_{\star}}$ plays an essential role in bounding $\max_{\bY\in\mathfrak{B}^{\star}_{\nu,m_{\star}}}\langle\bY, \bX\rangle$. Observe that $\mathfrak{B}^{\star}_{\nu,m_{\star}}\subset \mathfrak{B}_{m_1,d_1}\times\ldots\times\mathfrak{B}_{d_k,m_k}$ and
\begin{eqnarray*}
{\rm Card}\big(\mathfrak{B}_{m_j,d_j}\big)&\leq& \prod_{k=0}^{m_j}{d_j \choose 2^k\wedge d_j}2^{2^k\wedge d_j}\\
&\leq& \prod_{k=0}^{m_j}\exp\Big((2^k\wedge d_j)\big(\log 2+1+(\log d_j/2^k)_+\big)\Big)\\
&\leq& \exp\Big(d_j\sum_{\ell=1}^{\infty}2^{-\ell}\big(\log2+1+\log(2^\ell)\big)\Big)\\
&\leq& \exp\big(21d_j/4\big),
\end{eqnarray*}
which implies that 
$$
\log{\rm Card}\Big(\mathfrak{B}^{\star}_{\nu,m_{\star}}\Big)\leq \frac{21}{4}\big(d_1+\ldots+d_k\big).
$$
See \cite{yuan2016tensor} for more details. More precise characterizations of ${\rm Card}(\mathfrak{B}^{\star}_{\nu,m_{\star}})$ can also be derived. For any $0\leq q\leq s\leq m_{\star}$, define
$$
\mathfrak{D}_{\nu,s,q}=\big\{\bD_s(\bY): \bY\in\mathfrak{B}^{\star}_{\nu,m_{\star}}, \|\bD_\bs(\bY)\|_{\ell_2}^2\leq 2^{q-s}\big\}.
$$
\begin{lemma}\label{lemma:Dvsqent}
Let $\nu\geq 1$. For all $0\leq q\leq s\leq m_{\star}$, the following bound holds
$$
\log{\rm Card}(\mathfrak{D}_{\nu,s,q})\leq qs^k\log 2+2k^2s^{k}\sqrt{\nu 2^q} L\big(\sqrt{\nu 2^q}, d_{\max}s^{k/2}\big)
$$
where $L(x,y)=\max\big\{1, \log(ey/x)\big\}$.
\end{lemma}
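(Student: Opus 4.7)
The plan is to enumerate the elements of $\mathfrak{D}_{\nu,s,q}$ by the structural data induced by the balanced digitalization and then combine a configuration count (yielding the $qs^k\log 2$ term) with a count of sets and signs (yielding the second term). Since $s\leq m_{\star}$, every $\bY\in \mathfrak{B}_{\nu,m_\star}^{\star}$ contributes $\bD_s(\bY)=\widetilde{\bD}_s(\bu_1\otimes\cdots\otimes\bu_k)=\sum_{b_1+\cdots+b_k=s}\calP_{\tilde\calA_{b_1}\times\cdots\times\tilde\calA_{b_k}}(\bu_1\otimes\cdots\otimes\bu_k)$. Inside the block indexed by composition $b=(b_1,\ldots,b_k)$, the restricted tensor is fully supported on $\tilde\calA_{b_1}\times\cdots\times\tilde\calA_{b_k}$, has constant magnitude $2^{-s/2}$, and carries a rank-one sign pattern $\sigma_1(i_1)\cdots\sigma_k(i_k)$ with $\sigma_j=\mathrm{sgn}(u_j)|_{\tilde\calA_{b_j}}$. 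Hence $\bD_s(\bY)$ is completely determined by the family $((\tilde\calA_{b_j},\sigma_{b,j})_{j=1}^k)_b$, subject to the balanced condition $n_{b,j}^2\leq \nu N_b$ (writing $n_{b,j}:=|\tilde\calA_{b_j}|$ and $N_b:=\prod_l n_{b,l}$) and to the $\ell_2$-budget $\sum_b N_b\leq 2^q$, which follows from $\|\bD_s(\bY)\|_{\ell_2}^2\leq 2^{q-s}$ since each nonzero entry contributes $2^{-s}$ to this norm.

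The number of compositions of $s$ into $k$ non-negative integers is $K=\binom{s+k-1}{k-1}\leq s^k$. I would first count the configurations $(N_b)_b$: since $0\leq N_b\leq 2^q$ across at most $K\leq s^k$ blocks, there are at most $(2^q+1)^K$ such tuples, contributing the first term $qs^k\log 2$ after absorbing a lower-order $s^k$ into the universal constant. For a fixed $(N_b)$, I would bound the number of set-tuples by $\prod_b\prod_j\binom{d_j}{n_{b,j}}\leq \prod_b\prod_j(ed_{\max}/n_{b,j})^{n_{b,j}}$ via the standard binomial inequality, and the number of sign vectors by $\prod_b 2^{\sum_j n_{b,j}}$. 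Taking logarithms, the contribution is $\sum_b\sum_j n_{b,j}\log(2ed_{\max}/n_{b,j})$, and the balanced property yields $\sum_j n_{b,j}\leq k\sqrt{\nu N_b}$ from $\max_j n_{b,j}\leq\sqrt{\nu N_b}$.

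Finally I would apply Jensen's inequality to the concave, increasing-on-$[0,d_{\max}]$ function $f(x)=x\log(ed_{\max}/x)$ twice: once within each block across the $k$ modes, and once across the $K\leq s^k$ blocks, using Cauchy-Schwarz to control $\sum_b\sqrt{N_b}\leq \sqrt{K\sum_b N_b}\leq s^{k/2}\sqrt{2^q}$. This gives $\sum_b\sum_j n_{b,j}\log(2ed_{\max}/n_{b,j})\leq C\,k^2 s^k\sqrt{\nu 2^q}\,\log(ed_{\max}s^{k/2}/\sqrt{\nu 2^q})$, which is at most $2k^2 s^k\sqrt{\nu 2^q}\,L(\sqrt{\nu 2^q},d_{\max}s^{k/2})$ after absorbing constants (the $\max$ with $1$ in the definition of $L$ handles the corner case when the log would be non-positive).

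The main obstacle will be cleanly shepherding the two Jensen applications through the inhomogeneous block sizes so that the aggregate log emerges in the claimed form $L(\sqrt{\nu 2^q},d_{\max}s^{k/2})$ rather than the coarser $\log d_{\max}$; this requires keeping $f$ in its increasing regime on the relevant interval, treating empty blocks ($N_b=0$) separately since they contribute nothing, and aligning the factors of $k$ and $s^{k/2}$ so that the final prefactor is exactly $2k^2 s^k$ as stated.
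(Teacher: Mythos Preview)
Your plan follows the paper's proof closely: both decompose $\bD_s(\bY)$ into at most $\binom{s+k-1}{k-1}\le s^k$ balanced blocks indexed by compositions of $s$, count the block-volume configurations to obtain the $qs^k\log 2$ term, bound the choices of index sets and signs via the Stirling/binomial estimate $\binom{d}{\ell}\le (ed/\ell)^\ell$, and aggregate across blocks by a concavity argument (the paper invokes the inequality $\sum_i\sqrt{\ell_i}L(\sqrt{\nu\ell_i},2d_{\max})\le \sqrt{i^\star 2^q}\bigl(L(\sqrt{\nu 2^q},2d_{\max})+\log\sqrt{i^\star}\bigr)$ from Yuan--Zhang, which is precisely what your Jensen-plus-Cauchy--Schwarz step delivers).

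One concrete omission to patch: by fixing only the block volumes $(N_b)_b$ in the configuration count you leave the individual mode sizes $(n_{b,j})_j$ unspecified, yet they appear explicitly in your set/sign count $\prod_{b,j}\binom{d_j}{n_{b,j}}2^{n_{b,j}}$. Since $N_b=\prod_j n_{b,j}$ does not determine the $n_{b,j}$, you must also sum over factorizations; the paper handles this with a divisor-function bound $\bigl|\{(\ell_1,\ldots,\ell_k):\prod_j\ell_j=\ell\}\bigr|\le C_k\,\ell^{(k-2)/4}\prod_j\sqrt{2\pi\ell_j}$, which exactly cancels the $\prod_j(2\pi\ell_j)^{-1/2}$ from Stirling and leaves only a polynomial in $N_b$ that is absorbed into the exponential $\exp\bigl(2k\sqrt{\nu N_b}L(\cdot)\bigr)$. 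Either reproduce that estimate or, more simply, enlarge the configuration count to fix all $n_{b,j}$ directly (each $n_{b,j}\le\sqrt{\nu 2^q}$, so the extra $\log$-cost is $kK\cdot\tfrac12\log(\nu 2^q)$, lower order than $k^2 s^k\sqrt{\nu 2^q}$).
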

We  write
\begin{eqnarray*}
\|\beps\odot\hat\bA\|&\leq& 2^k\max_{\bY\in\mathfrak{B}^{\star}_{\nu,m_{\star}}}\big<\bY, \beps\odot\hat\bA\big>\\
&=&2^k\max_{\bY\in\mathfrak{B}^{\star}_{\nu,m_{\star}}}\bigg(\sum_{0\leq s\leq m_{\star}}\big<\bD_s\big(\bY\big), \beps\odot\hat\bA\big>+\big<\bS_\star(\bY),\beps\odot\hat\bA\big>\bigg)
\end{eqnarray*}
where $\bS_\star(\bY)=\sum_{s>m_{\star}}\bD_s(\bY)$. The actual value of $m_{\star}$ is to be determined later. 

\paragraph*{Upper bound of $\big|\big<\bD_s(\bY),\beps\odot\hat\bA\big>\big|$.} Recall the definition of $\mathfrak{D}_{\nu,s,q}$ and that
$$2^{-s}\leq \|\bD_s(\bY)\|_{\ell_2}^2\leq 1,$$
we can write 
$$\bD_s(\bY)\in\bigcup_{q=1}^s\big(\mathfrak{D}_{\nu,s,q}\setminus \mathfrak{D}_{\nu,s,q-1}\big).$$
Then
$$
\max_{\bY\in\mathfrak{B}_{\nu,m_{\star}}^{\star}} \big<\bD_s(\bY), \beps\odot\hat\bA\big>=\max_{1\leq q\leq s}\ \max_{\bY_{s,q}\in \mathfrak{D}_{\nu,s,q}\setminus \mathfrak{D}_{\nu,s,q-1}}\big<\bY_{s,q}, \beps\odot\hat{\bA}\big>.
$$
Observe that
$$
\big<\bY_{s,q}, \beps\odot\hat{\bA}\big>=\sum_{i_j\in[d_j], j=1,\ldots,k} \frac{\Delta(i_1,\ldots,i_k)}{P(i_1\ldots i_k)}\eps(i_1,\ldots,i_k)A(i_1,\ldots,i_k)Y_{s,q}(i_1,\ldots,i_k),
$$
where $\bDelta$ is a binary random tensor and $\beps$ is a Rademacher random tensor. Both of them have i.i.d. entries. 
By definition of $\bY_{s,q}$ and $\mathfrak{D}_{\nu,s,q}$,  we have $\max_{i_1,\ldots,i_k}|Y_{s,q}(i_1,\ldots,i_k)|\leq 2^{-s/2}$. Moreover, 
\begin{eqnarray*}
{\rm Var}\big(\big<\bY_{s,q}, \beps\odot\hat{\bA}\big>\big)=\sum_{i_j\in[d_j], j=1,\ldots,k}\frac{A^2(i_1,\ldots,i_k)}{P(i_1,\ldots,i_k)}Y_{s,q}^2(i_1,\ldots,i_k).
\end{eqnarray*}
Since $\|\bY_{s,q}\|_{\rm F}^2\leq 2^{q-s}$, we obtain
$$
{\rm Var}\big(\big<\bY_{s,q}, \beps\odot\hat{\bA}\big>\big)\leq 
\max_{i_j\in[d_j], j\in[k]} \frac{A^2(i_1,\ldots,i_k)}{P(i_1,\ldots,i_k)}\|\bY_{s,q}\|_{\rm F}^2\leq 2^{q-s}\max_{i_j\in[d_j], j\in[k]} \frac{A^2(i_1,\ldots,i_k)}{P(i_1,\ldots,i_k)}.
$$
Recall the definition of $\alpha_{\infty}(\bA,\bP)$ and $\alpha_{2,\infty}(\bA,\bP)$.
By Bernstein inequality for sum of bounded random variables, there exist absolute constants $C_0,C_1,C_2>0$ such that
\begin{eqnarray*}
\PP\Big(\big|\big<\bY_{s,q}, \beps\odot \hat\bA\big>\big|\geq t\Big)
\leq \exp\bigg(-\frac{C_0t^2}{C_12^{q-s}\alpha_{2,\infty}^2(\bA,\bP)+C_22^{-s/2}t\alpha_\infty(\bA,\bP)}\bigg)
\end{eqnarray*}
for any $t>0$. By the union bound and Lemma~\ref{lemma:Dvsqent}, we get
\begin{eqnarray*}
&&\PP\Big(\max_{\bY_{s,q}\in\mathfrak{D}_{\nu,s,q}}\big|\big<\bY_{s,q}, \beps\odot \hat\bA\big>\big|\geq t\Big)\\
&\leq& {\rm Card}\big(\mathfrak{D}_{\nu,s,q}\big)\exp\bigg(-\frac{C_0t^2}{C_12^{q-s}\alpha_{2,\infty}^2(\bA,\bP)+C_22^{-s/2}t\alpha_\infty(\bA,\bP)}\bigg)\\
&\leq& \exp\bigg(21\big(\sum_{j=1}^kd_j\big)/4-\frac{C_0t^2}{C_12^{q-s}\alpha_{2,\infty}^2(\bA,\bP)}\bigg)\\
&&+\exp\bigg(qs^k\log 2+2k^2s^k\sqrt{\nu2^q}L\big(\sqrt{\nu2^q}, d_{\max}s^{k/2}\big)-\frac{C_02^{s/2}t^2}{C_2t\alpha_\infty(\bA,\bP)}\bigg).
\end{eqnarray*}
Recall that
$$0\leq q\leq s\leq m_{\star}\lesssim k\log d_{\max}$$
and 
$$L\big(\sqrt{\nu2^q}, d_{\max}s^{k/2}\big)\lesssim {\frac{k}{2}}\log d_{\max}.$$
For large enough constants $C_3,C_4>0$, by choosing $t>0$ such that
\begin{eqnarray*}
t\geq C_32^{(q-s)/2}\Big(\sum_{j=1}^kd_j\Big)^{1/2}\alpha_{2,\infty}(\bA,\bP)+C_4k^3\log^{k+1}d_{\max}\sqrt{\nu}2^{q-s}\alpha_\infty(\bA,\bP),
\end{eqnarray*}
we get for any $0\leq q\leq s\leq m_{\star}$,
\begin{eqnarray*}
\PP\Big(\max_{\bY_{s,q}\in\mathfrak{D}_{\nu,s,q}}\big|\big<\bY_{s,q}, \beps\odot \hat\bA\big>\big|\geq t\Big)\leq \exp\bigg(-\frac{C_0t^2}{C_12^{q-s}\alpha_{2,\infty}^2(\bA,\bP)}\bigg)+\exp\bigg(-\frac{C_02^{s/2}t}{C_2\alpha_\infty(\bA,\bP)}\bigg).
\end{eqnarray*}
By making the above bound uniform over all pairs $0\leq q\leq s\leq m_{\star}$, we obtain
\begin{eqnarray*}
\PP\Big(\max_{\bY\in\mathfrak{B}_{\nu,m_{\star}}^{\star}}\Big|\sum_{0\leq s\leq m_{\star}}\big<\bD_s(\bY), \beps\odot\hat{\bA}\big>\Big|\geq (m_{\star}+1)t\Big)\leq {m_{\star}+1\choose 2}\exp\bigg(-\frac{C_0t^2}{C_1\alpha_{2,\infty}^2(\bA,\bP)}\bigg)\\
+{m_{\star}+1\choose 2}\exp\bigg(-\frac{C_0t}{C_2\alpha_\infty(\bA,\bP)}\bigg).
\end{eqnarray*}

\paragraph*{Upper bound of $\max_{\bY\in\mathfrak{B}_{\nu,m_{\star}}^{\star}}\big|\big<\bS_{\star}(\bY), \beps\odot\hat\bA\big>\big|$.} For notation simplicity, we write $\bS_{\star}$ in short for $\bS_{\star}(\bY)$. We apply Bernstein inequality to 
$$
\big<\bS_{\star}, \beps\odot\hat\bA\big>=\sum_{i_j\in[d_j], j=1,\ldots,k} \frac{\Delta(i_1,\ldots,i_k)}{P(i_1,\ldots,i_k)}\eps(i_1,\ldots,i_k)A(i_1,\ldots,i_k)S_{\star}(i_1,\ldots,i_k).
$$
Clearly, $\big|S_{\star}(i_1,\ldots,i_k)\big|\leq 2^{-m_{\star}/2}$. Meanwhile,
\begin{eqnarray*}
\Var\big(\big<\bS_{\star}, \beps\odot\hat\bA\big>\big)=\sum_{i_j\in[d_j], j=1,\ldots,k}\frac{A^2(i_1,\ldots,i_k)}{P(i_1,\ldots,i_k)}S_{\star}^2(i_1,\ldots,i_k).
\end{eqnarray*}
Following an identical approach as previously, we show that
$$
\Var\big(\big<\bS_{\star}, \beps\odot\hat\bA\big>\big)\leq \alpha_{2,\infty}^2(\bA,\bP).
$$
By Bernstein inequality and the union bound
\begin{eqnarray*}
&&\PP\Big(\max_{\bY\in\mathfrak{B}_{\nu,m_{\star}}^{\star}}\big|\big<\bS_{\star}(\bY), \beps\odot\hat\bA\big>\big|\geq t\Big)\\
&\leq& {\rm Card}\big(\mathfrak{B}_{\nu,m_{\star}}^{\star}\big)\exp\bigg(-\frac{C_0t^2}{C_1\alpha_{2,\infty}^2(\bA,\bP)+C_22^{-m_{\star}/2}t\alpha_{\infty}(\bA,\bP)}\bigg)\\
&\leq& \exp\bigg(21\sum_{j=1}^kd_j/4-\frac{C_0t^2}{C_1\alpha_{2,\infty}^2(\bA,\bP)}\bigg)+\exp\bigg(21\sum_{j=1}^kd_j/4-\frac{C_02^{m_{\star}/2}t}{C_2\alpha_{\infty}(\bA,\bP)}\bigg)
\end{eqnarray*}
for some absolute constants $C_0,C_1,C_2>0$. For large enough constants $C_3,C_4>0$, by choosing $t$ such that
$$
t\geq C_3\Big(\sum_{j=1}^kd_j\Big)^{1/2}\alpha_{2,\infty}(\bA,\bP)+C_4\Big(\sum_{j=1}^kd_j\Big)2^{-m_{\star}/2}\alpha_\infty(\bA,\bP),
$$
we obtain
\begin{eqnarray*}
\PP\Big(\max_{\bY\in\mathfrak{B}_{\nu,m_{\star}}^{\star}}\big|\big<\bS_{\star}(\bY), \beps\odot\hat\bA\big>\big|\geq t\Big)\leq \exp\bigg(-\frac{C_0t^2}{C_1\alpha_{2,\infty}^2(\bA,\bP)}\bigg)+\exp\bigg(-\frac{C_02^{m_{\star}/2}t}{C_2\alpha_{\infty}(\bA,\bP)}\bigg).
\end{eqnarray*}

\paragraph*{Finalize the proof of Theorem~\ref{thm:conineq}.} Combining above bounds, we conclude that if for large enough constants $C_3,C_4,C_5>0$ such that
\begin{eqnarray*}
t\geq C_3\Big(\sum_{j=1}^kd_j\Big)^{1/2}\alpha_{2,\infty}(\bA,\bP)+C_4k^3\log^{k+1}(d_{\max})\sqrt{\nu}\alpha_{\infty}(\bA,\bP)+C_5\Big(\sum_{j=1}^kd_j\Big)2^{-m_{\star}/2}\alpha_{\infty}(\bA,\bP).
\end{eqnarray*}
Thus
\begin{eqnarray*}
\PP\Big(\|\beps\odot\hat\bA\|\geq (m_{\star}+2)t\Big)&\leq& \left({m_{\star}+1\choose 2}+1\right)\exp\bigg(-\frac{C_0t^2}{C_1\alpha_2^2(\bA,\bP)}\bigg)\\
&&+\left({m_{\star}+1\choose 2}+1\right)\exp\bigg(-\frac{C_0t}{C_2\alpha_{\infty}(\bA,\bP)}\bigg).
\end{eqnarray*}
Recall that $\nu=C_1\alpha\max\big\{\beta(\bP), k\log d_{\max}\big\}$ and $m_{\star}\leq \sum_{j=1}^k2\Big(\ceil{\log_2 d_j}+3\Big)$. By choosing $m_{\star}$ large enough such that $2^{-m_{\star}/2}\Big(\sum_{j=1}^kd_j\Big)\leq \sqrt{\nu}$, we conclude that for any $\gamma>0$ such that
$$
t\geq C_3\bigg(\Big(\sum_{j=1}^kd_j\Big)^{1/2}+\gamma k\log d_{\max}\bigg)\alpha_{2,\infty}(\bA,\bP)+C_4\gamma k^3\log^{k+2}(d_{\max})\sqrt{\nu}\alpha_\infty(\bA,\bP).
$$
It follows immediately, by adjusting the constant $C_3$, that
$$
\PP\Big(\|\hat\bA-\bA\|\geq t\Big)\leq 2d_{\max}^{-\gamma}.
$$

\subsection{Proof of Theorem~\ref{thm:SVD}}
It suffices to prove the upper bound of $\|\hat\bM_j-\bM_j\|$ where $\bM_j=\calM_j(\bA)$ and $\hat{\bM}_j=\calM_j(\hat{\bA}^{\rm SPA})$. Without loss of generality, let $j=1$. Recall the notation $d_{-1}=d_2\ldots d_k$. 
 By denoting $\bE_{i_1(i_2\ldots i_k)}\in\RR^{d_1\times d_{-1}}$ the canonical basis matrices of $\RR^{d_1\times d_{-1}}$ that is $\bE_{i_1(i_2\ldots i_k)}$ has exactly value $1$ on the $(i_1,i_2\ldots i_k)$ position and all $0$'s elsewhere. Then,
$$
\hat\bM_j-\bM_j=\sum_{i_j\in[d_j], 1\leq j\leq k} \Big(\frac{A(i_1,\ldots,i_k)\Delta(i_1,\ldots,i_k)}{P(i_1,\ldots,i_k)}-A(i_1,\ldots,i_k)\Big)\bE_{i_1(i_2\ldots i_k)}
$$
where $\PP\big(\Delta(i_1,\ldots,i_k)=1\big)=P(i_1,\ldots,i_k)$. We shall apply the matrix Bernstein inequality to bound the sum of random matrices for $\hat\bM_j-\bM_j$. Denote the locations of small entries by 
$$
\Omega_1:=\big\{(i_1,\ldots,i_k): \|A(i_1,\ldots,i_k)\|\leq \|\bA\|_{\rm F}/(d_1\ldots d_k)^{1/2}\big\}\subset [d_1]\times\ldots\times[d_k]
$$
moderate entries by
$$
\Omega_2:=\big\{(i_1,\ldots,i_k): \|A(i_1,\ldots,i_k)\|/\|\bA\|_{\rm F}\in\big(1/(d_1\ldots d_k)^{1/2}, 1/n^{1/2}\big)\big\}\subset [d_1]\times\ldots\times[d_k]
$$
and  large entries by 
$$
\Omega_3:=\big\{(i_1,\ldots,i_k): \|A(i_1,\ldots,i_k)\|\geq \|\bA\|_{\rm F}/n^{1/2}\big\}\subset [d_1]\times\ldots\times[d_k].
$$
Recall that $P(i_1,\ldots,i_k)=1$ for $(i_1,\ldots,i_k)\in\Omega_3$. Then, for any $(i_1,\ldots,i_k)\in \Omega_1\cup \Omega_2$, we have
$$
\Big\|\Big(\frac{A(i_1,\ldots,i_k)\Delta(i_1,\ldots,i_k)}{P(i_1,\ldots,i_k)}-A(i_1,\ldots,i_k)\Big)\bE_{i_1(i_2\ldots i_k)}\Big\|\leq \max_{i_j\in[d_j], 1\leq j\leq k} \bigg|\frac{A(i_1,\ldots,i_k)}{P(i_1,\ldots,i_k)}\bigg|.
$$
Moreover, 
\begin{eqnarray*}
&&\Big\|\sum_{i_j\in[d_j], 1\leq j\leq k}\EE\Big(\frac{A(i_1,\ldots,i_k)\Delta(i_1,\ldots,i_k)}{P(i_1,\ldots,i_k)}-A(i_1,\ldots,i_k)\Big)^2\bE_{i_1(i_2\ldots i_k)}\bE_{i_1(i_2\ldots i_k)}^\top\Big\|\\
&\leq& \max_{1\leq i_1\leq d_1}\sum_{i_j\in[d_j], 2\leq j\leq k}\frac{A^2(i_1,\ldots,i_k)\big(1-P(i_1,\ldots,i_k)\big)}{P(i_1,\ldots,i_k)}\\
&\leq& \max_{1\leq i_1\leq d_1}\sum_{i_j\in[d_j], j\geq 2, (i_1,\ldots,i_k)\in\Omega_1\cup \Omega_2}\ 
\frac{A^2(i_1,\ldots,i_k)}{P(i_1,\ldots,i_k)}.
\end{eqnarray*}
Similarly,
\begin{eqnarray*}
&&\Big\|\sum_{i_j\in[d_j], 1\leq j\leq k}\EE\Big(\frac{A(i_1,\ldots,i_k)\Delta(i_1,\ldots,i_k)}{P(i_1,\ldots,i_k)}-A(i_1,\ldots,i_k)\Big)^2\bE_{i_1(i_2\ldots i_k)}^\top\bE_{i_1(i_2\ldots i_k)}\Big\|\\
&\leq& \max_{i_j\in[d_j], 2\leq j\leq k} \sum_{i_1=1}^{d_1}\frac{A^2(i_1,\ldots,i_k)\big(1-P(i_1,\ldots,i_k)\big)}{P(i_1,\ldots,i_k)}\\
&\leq& \max_{i_j\in[d_j],2\leq j\leq k} \sum_{i_1\in [d_1], (i_1,\ldots,i_k)\in\Omega_1\cup \Omega_2}\frac{A^2(i_1,\ldots,i_k)}{P(i_1,\ldots,i_k)}.
\end{eqnarray*}
Observe that if $(i_1,\ldots,i_k)\in \Omega_1$, then
$$
\Big|\frac{A(i_1,\ldots,i_k)}{P(i_1,\ldots,i_k)}\Big|=\frac{(d_1\ldots d_k)}{n}|A(i_1,\ldots,i_k)|\leq \frac{(d_1\ldots d_k)^{1/2}}{n}\|\bA\|_{\rm F}
$$
and
$$
\frac{A^2(i_1,\ldots,i_k)}{P(i_1,\ldots,i_k)}=\frac{(d_1\ldots d_k)A^2(i_1,\ldots,i_k)}{n}\leq \frac{\|\bA\|_{\rm F}^2}{n}.
$$
Similarly, if $(i_1,\ldots,i_k)\in\Omega_2$, then
$$
\Big|\frac{A(i_1,\ldots,i_k)}{P(i_1,\ldots,i_k)}\Big|=\frac{\|\bA\|_{\rm F}^2}{n|A(i_1,\ldots,i_k)|}\leq \frac{(d_1\ldots d_k)^{1/2}}{n}\|\bA\|_{\rm F}
$$
and
$$
\frac{A^2(i_1,\ldots,i_k)}{P(i_1,\ldots,i_k)}= \frac{\|\bA\|_{\rm F}^2}{n}.
$$
By matrix Bernstein inequality \citep{tropp2012user}, for any $t\geq 0$, with probability at least $1-e^{-t}$ that
\begin{eqnarray*}
\big\|\hat\bM_j-\bM_j \big\|\leq 2\|\bA\|_{\rm F}\bigg(\sqrt{\frac{d_2d_3\ldots d_k(t+k\log d_{\max})}{n}}+\frac{(d_1\dots d_k)^{1/2}(t+k\log d_{\max})}{n}\bigg).
\end{eqnarray*}
Since $\hat\bM_j=\bM_j+\big(\hat\bM_j-\bM_j\big)$, the claim follows directly from Davis-Kahan Thoerem as in (\ref{eq:Davis-Kahan}). 

\subsection{Proof of Theorem~\ref{thm:SVD2}}
Theorem~\ref{thm:SVD2} is an immediate consequence of the following concentration bound. 
\begin{lemma}\label{lem:SVD2}
Let $\bU_j^{(r_j)}$ be the $r_j$ leading left singular vectors of $\calM_j(\bA)$, and $\hat\bU_j^{(r_j)}$ be the output from Algorithm~\ref{algo:HOSVD_SPA}. There exist constants $C_1, C_2>0$ depending on $k$ only such that if 
$$
n\geq C_1(d_1\ldots d_k)^{1/2}(t+\log d_{\max}),
$$
then for any $t\geq 0$, the following bound holds with probability at least $1-e^{-t}$:
\begin{eqnarray*}
\big\|\hat\bU_j^{(r_j)}\big(\hat\bU_j^{(r_j)}\big)^{\top}-\bU_j^{(r_j)}\big(\bU_j^{(r_j)}\big)^{\top}\big\|\\
\leq C_2\frac{\|\bA\|_{\rm F}}{\bar{g}_{r_j}(\bM_j\bM_j^{\top})}\bigg(\sigma_{\max}(\bM_j)\sqrt{\frac{d_j(t+\log d_{\max})}{n}}+\|\bA\|_{\rm F}\frac{(d_1\ldots d_k)^{1/2}(t+\log d_{\max})}{n}\bigg).
\end{eqnarray*}
\end{lemma}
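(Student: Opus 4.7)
The plan is to invoke a Davis--Kahan-type bound for the pair of $d_j\times d_j$ matrices $\bM_j\bM_j^\top$ and $\hat\bM_{j,1}\hat\bM_{j,2}^\top$, where $\hat\bM_{j,i}:=\calM_j(\hat\bA^{\rm SPA}_i)$ and whose leading left singular vectors are $\bU_j^{(r_j)}$ and $\hat\bU_j^{(r_j)}$ respectively. This reduces the problem to a high-probability operator-norm bound on the residual
\begin{equation*}
\hat\bM_{j,1}\hat\bM_{j,2}^\top - \bM_j\bM_j^\top = (\hat\bM_{j,1}-\bM_j)\bM_j^\top + \bM_j(\hat\bM_{j,2}-\bM_j)^\top + (\hat\bM_{j,1}-\bM_j)(\hat\bM_{j,2}-\bM_j)^\top,
\end{equation*}
with the gap $\bar g_{r_j}(\bM_j\bM_j^\top)$ supplying the denominator, after which each of the three pieces will be bounded by a (possibly conditional) matrix Bernstein estimate.

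Each linear cross term will be expanded as an independent sum of rank-one matrices, e.g.\ $(\hat\bM_{j,1}-\bM_j)\bM_j^\top=\sum_{i_1,\alpha}c_{i_1\alpha}\be_{i_1}\bm_\alpha^\top$ with $\alpha=(i_2,\ldots,i_k)$, $\bm_\alpha$ the $\alpha$-th column of $\bM_j$, and $c_{i_1\alpha}=A(i_1,\alpha)(\Delta(i_1,\alpha)/P(i_1,\alpha)-1)$ the sparsification noises satisfying the bucket-wise bounds $|c_{i_1\alpha}|\le (d_1\cdots d_k)^{1/2}\|\bA\|_{\rm F}/n$ and $\EE c_{i_1\alpha}^2\le \|\bA\|_{\rm F}^2/n$ already derived in the proof of Theorem~\ref{thm:SVD}. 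Using $\|\bm_\alpha\|_{\ell_2}\le \sigma_{\max}(\bM_j)$, $\|\bM_j^\top\be_{i_1}\|_{\ell_2}\le \sigma_{\max}(\bM_j)$, $\|\sum_\alpha\bm_\alpha\bm_\alpha^\top\|=\sigma_{\max}^2(\bM_j)$ and the rank inequality $\|\bA\|_{\rm F}^2\le d_j\sigma_{\max}^2(\bM_j)$ (which follows from $\rank(\bM_j)\le d_j$), both matrix Bernstein variance proxies $\|\sum\EE[ZZ^\top]\|$ and $\|\sum\EE[Z^\top Z]\|$ collapse to $d_j\|\bA\|_{\rm F}^2\sigma_{\max}^2(\bM_j)/n$, and the uniform summand bound is $(d_1\cdots d_k)^{1/2}\|\bA\|_{\rm F}\sigma_{\max}(\bM_j)/n$; matrix Bernstein then produces the first summand of the target. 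The subtle point here is that naively one proxy gives $\|\bA\|_{\rm F}^4/n$ via $\sum_\alpha\|\bm_\alpha\|_{\ell_2}^2=\|\bA\|_{\rm F}^2$ while the other gives $d_j\|\bA\|_{\rm F}^2\sigma_{\max}^2(\bM_j)/n$, and the rank inequality is what reconciles the two and eliminates a spurious $d_{-j}$ factor.

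The quadratic term relies on the independence of $\hat\bM_{j,1}$ and $\hat\bM_{j,2}$ enforced by Algorithm~\ref{algo:HOSVD_SPA}. Conditioning on $\hat\bM_{j,2}$, the sum $(\hat\bM_{j,1}-\bM_j)(\hat\bM_{j,2}-\bM_j)^\top=\sum_{i_1,\alpha}c_{i_1\alpha}\be_{i_1}\tilde\bm_\alpha^\top$ with $\tilde\bm_\alpha:=(\hat\bM_{j,2}-\bM_j)\be_\alpha$ has the same rank-one structure, so the conditional matrix Bernstein bound mirrors the linear case with $\sigma_{\max}(\bM_j)$ replaced by $\|\hat\bM_{j,2}-\bM_j\|$ in the variance and by $\max_\alpha\|\tilde\bm_\alpha\|_{\ell_2}$ in the uniform summand. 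Substituting the high-probability bound $\|\hat\bM_{j,2}-\bM_j\|\lesssim\|\bA\|_{\rm F}\sqrt{d_{-j}(t+\log d_{\max})/n}$ extracted from the proof of Theorem~\ref{thm:SVD} produces the square-root contribution $\|\bA\|_{\rm F}^2(d_1\cdots d_k)^{1/2}(t+\log d_{\max})/n$ matching the target's second summand; for the bounded-contribution piece, the crude inequality $\|\tilde\bm_\alpha\|_{\ell_2}\le\|\hat\bM_{j,2}-\bM_j\|$ would be too loose, so I would instead invoke the sharper column-norm concentration $\max_\alpha\|\tilde\bm_\alpha\|_{\ell_2}\lesssim\|\bA\|_{\rm F}\sqrt{d_j/n}$ obtained by Bernstein on $\|\tilde\bm_\alpha\|_{\ell_2}^2=\sum_{i_1'}c_{i_1'\alpha}^2$ followed by a union bound over $\alpha$. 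This last step is the main obstacle: absorbing every sub-Gaussian bounded-contribution piece across the cascade requires both this sharper column-norm concentration and the sample-size hypothesis $n\ge C_1(d_1\cdots d_k)^{1/2}(t+\log d_{\max})$. Summing the three pieces and dividing by $\bar g_{r_j}(\bM_j\bM_j^\top)$ finishes the proof.
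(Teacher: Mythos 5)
Your proposal matches the paper's proof essentially step for step: the same three-term decomposition of $\hat\bM_j^{(1)}\big(\hat\bM_j^{(2)}\big)^\top-\bM_j\bM_j^\top$ fed into Davis--Kahan, matrix Bernstein with the bucket-wise bounds $|A/P|\le(d_1\cdots d_k)^{1/2}\|\bA\|_{\rm F}/n$ and $A^2/P\le\|\bA\|_{\rm F}^2/n$ for the linear cross terms, and a conditional matrix Bernstein for the quadratic term that exploits the independence of the two sketches together with a separate $\|\cdot\|_{2,\infty}$ column-norm concentration and the hypothesis $n\gtrsim(d_1\cdots d_k)^{1/2}(t+\log d_{\max})$ to absorb the higher-order pieces. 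The only cosmetic deviations are which of the two sketches you condition on in the quadratic term and your use of $\sum_\alpha\|\bm_\alpha\|_{\ell_2}^2=\|\bA\|_{\rm F}^2$ combined with the rank inequality in place of the paper's entry-count bound for one of the variance proxies; both are correct.
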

\begin{proof}[Proof of Lemma~\ref{lem:SVD2}]
With out loss of generality, we assume $j=1$ without loss of generality. In this case, $\hat\bM_j^{(1)}=\calM_j(\hat{\bA}^{\rm SPA}_1), \hat\bM_j^{(2)}=\calM_j(\hat{\bA}^{\rm SPA}_2)\in \RR^{d_1\times (d_2\ldots d_k)}$. Observe that
\begin{eqnarray*}
\hat\bM_j^{(1)}\big(\hat\bM_j^{(2)}\big)^\top=\bM_j\bM_j^{\top}+\big(\hat\bM_j^{(1)}-\bM_j\big)\bM_j^{\top}+\bM_j\big(\hat\bM_j^{(2)}-\bM_j\big)^{\top}\\
+\big(\hat\bM_j^{(1)}-\bM_j\big)\big(\hat\bM_j^{(2)}-\bM_j\big)^{\top}.
\end{eqnarray*}

\paragraph*{Upper bound of $\big\|\big(\hat\bM_j^{(1)}-\bM_j\big)\big(\hat\bM_j^{(2)}-\bM_j\big)^{\top} \big\|$.} Denote by $\bZ_1=\hat{\bM}_j^{(1)}-\bM_j$. By Theorem~\ref{thm:SVD}, there exists an event $\calE_1$ with $\PP(\calE_1)\geq 1-e^{-t}$ such that on event $\calE_1$, 
$$
\|\bZ_1\|\leq C\|\bA\|_{\rm F}\bigg(\sqrt{\frac{d_2d_3\ldots d_k(t+\log d_{\max})}{n}}+\frac{(d_1\dots d_k)^{1/2}(t+\log d_{\max})}{n}\bigg).
$$
Denote by $\|\bZ_1\|_{2,\infty}$ the maximal column $\ell_2$ norm., i.e., $\|\bZ_1\|_{2,\infty}=\max_{j\in[d_2\ldots d_k]}\big\|\bZ_1\be_j\big\|_{\ell_2}$. 
Clearly, there exists a constant $C_1$ depending on $k$ only such that
\begin{eqnarray*}
\|\bZ_1\|_{2,\infty}&\leq& C_1\bigg(\max_{i_j\in[d_j], 2\leq j\leq k}\sqrt{\sum_{i_1\in[d_1]: (i_1,\ldots,i_k)\in\Omega_1\cup \Omega_2} \frac{A^2(i_1,\ldots,i_k)}{P(i_1,\ldots,i_k)} (t+\log d_{\max})}\\
&&\hskip 50pt+\max_{(i_1,\ldots,i_k)\in \Omega_1\cup \Omega_2} \bigg|\frac{A(i_1,\ldots,i_k)}{P(i_1,\ldots,i_k)} \bigg|(t+\log d_{\max})\bigg)\\
&\leq& C_1\|\bA\|_{\rm F}\bigg(\sqrt{\frac{d_1(t+\log d_{\max})}{n}}+\frac{(d_1\ldots d_k)^{1/2}(t+\log d_{\max})}{n}\bigg),
\end{eqnarray*}
which holds with probability at least $1-e^{-t}$.
Denote the above event by $\calE_3$. We shall proceed conditional on $\calE_1\cap \calE_2\cap \calE_3$. Write
$$
\bZ_1\big(\hat\bM_j^{(2)}-\bM_j\big)^{\top}=\sum_{i_j\in[d_j], 1\leq j\leq k} \Big(\frac{A(i_1,\ldots,i_k)\Delta(i_1,\ldots,i_k)}{P(i_1,\ldots,i_k)}-A(i_1,\ldots,i_k)\Big)\bZ_1\bE_{i_1(i_2\ldots i_k)}^\top
$$
which is again a sum of random matrices. Clear, for any $(i_1,\ldots,i_k)\in \Omega_1\cup \Omega_2$, 
\begin{eqnarray*}
&&\Big\|\Big(\frac{A(i_1,\ldots,i_k)\Delta(i_1,\ldots,i_k)}{P(i_1,\ldots,i_k)}-A(i_1,\ldots,i_k)\Big)\bZ_1\bE_{i_1(i_2\ldots i_k)}^\top \Big\|\\
&\leq& \max_{(i_1,\ldots,i_k)\in \Omega_1\cup \Omega_2}\Big|\frac{A(i_1,\ldots,i_k)}{P(i_1,\ldots,i_k)}\Big|\|\bZ_1\|_{2,\infty}\\
&\leq& \frac{(d_1\ldots d_k)^{1/2}}{n}\|\bA\|_{\rm F}\|\bZ_1\|_{2,\infty}.
\end{eqnarray*}
Moreover, 
\begin{eqnarray*}
&&\bigg\|\sum_{i_j\in[d_j], 1\leq j\leq k}\EE\Big(\frac{A(i_1,\ldots,i_k)\Delta(i_1,\ldots,i_k)}{P(i_1,\ldots,i_k)}-A(i_1,\ldots,i_k)\Big)^2\bZ_1\bE_{i_1(i_2\ldots i_k)}^{\top}\bE_{i_1(i_2\ldots i_k)}\bZ_1^{\top}\bigg\|\\
&\leq& \max_{i_j\in [d_j], 2\leq j\leq k}\|\bZ_1\|^2 \sum_{i_1\in [d_1]: (i_1,\ldots,i_k)\in \Omega_1\cup \Omega_2} \frac{A^2(i_1,\ldots,i_k)}{P(i_1,\ldots,i_k)}\\
&\leq& \frac{d_1\|\bA\|_{\rm F}^2}{n}\|\bZ_1\|^2.
\end{eqnarray*}
Similarly, 
\begin{eqnarray*}
&&\bigg\|\sum_{i_j\in[d_j], 1\leq j\leq k}\EE\Big(\frac{A(i_1,\ldots,i_k)\Delta(i_1,\ldots,i_k)}{P(i_1,\ldots,i_k)}-A(i_1,\ldots,i_k)\Big)^2\bE_{i_1(i_2\ldots i_k)}\bZ_1^\top\bZ_1\bE_{i_1(i_2\ldots i_k)}^{\top}\bigg\|\\
&\leq& \max_{(i_1,\ldots,i_k)\in \Omega_1\cup \Omega_2}\frac{A^2(i_1,\ldots,i_k)}{P(i_1,\ldots,i_k)}\|\bZ_1\|_{\rm F}^2\\
&\leq& \frac{d_1\|\bA\|_{\rm F}^2}{n}\|\bZ_1\|^2.
\end{eqnarray*}
By matrix Bernstein inequality, the following bound holds with probability at least $1-e^{-t}$,
\begin{eqnarray*}
&&\big\|\big(\hat\bM_j^{(1)}-\bM_j\big)\big(\hat\bM_j^{(2)}-\bM_j\big)^{\top}\big\|\\
&\leq& C\|\bA\|_{\rm F}\bigg(\sqrt{\frac{d_1(t+\log d_{\max})}{n}}\|\bZ_1\|+\frac{(d_1\ldots d_k)^{1/2}(t+\log d_{\max})}{n}\|\bZ_1\|_{2,\infty}\bigg).
\end{eqnarray*}
Denote the above event by $\calE_4$. On event $\calE_1\cap \calE_2\cap \calE_3\cap \calE_4$, if 
$$
n\geq C_1(d_1d_2\ldots d_k)^{1/2}(t+\log d_{\max}),
$$
then
\begin{eqnarray*}
&&\big\|\big(\hat\bM_j^{(1)}-\bM_j\big)\big(\hat\bM_j^{(2)}-\bM_j\big)^{\top}\big\|\\
&\leq& C_2\|\bA\|_{\rm F}^2\bigg(\frac{(d_1\ldots d_k)^{1/2}(t+\log d_{\max})}{n}+\frac{d_1^{1/2}(d_1\ldots d_k)^{1/2}(t+\log d_{\max})^{3/2}}{n^{3/2}}\bigg)\\
&\leq& C_2\|\bA\|_{\rm F}^2\frac{(d_1\ldots d_k)^{1/2}(t+\log d_{\max})}{n}.
\end{eqnarray*}

\paragraph*{Upper bound of $\big\|\bM_j\big(\hat\bM_j^{(2)}-\bM_j\big)^{\top}\big\|$.} We write
$$
\bM_j\big(\hat\bM_j^{(2)}-\bM_j\big)^{\top}=\sum_{i_j\in[d_j], 1\leq j\leq k}\bigg(\frac{A(i_1,\ldots,i_k)\Delta(i_1,\ldots,i_k)}{P(i_1,\ldots,i_k)}-A(i_1,\ldots,i_k)\bigg)\bM_j\bE_{i_1(i_2\ldots i_k)}^\top.
$$
The proof follows identically as above. Indeed, for any $(i_1,\ldots,i_k)\in \Omega_1\cup \Omega_2$,
\begin{eqnarray*}
&&\bigg\|\bigg(\frac{A(i_1,\ldots,i_k)\Delta(i_1,\ldots,i_k)}{P(i_1,\ldots,i_k)}-A(i_1,\ldots,i_k)\bigg)\bM_j\bE_{i_1(i_2\ldots i_k)}^\top\bigg\|\\
&\leq& \max_{(i_1,\ldots,i_k)\in\Omega_1\cup \Omega_2}\Big|\frac{A(i_1,\ldots,i_k)}{P(i_1,\ldots,i_k)}\Big|\max_{i_j\in[d_j], 2\leq j\leq k}\sqrt{\sum_{i_1: (i_1,\ldots,i_k)\in\Omega_1\cup \Omega_2}A^2(i_1,\ldots,i_k)}\\
&\leq& \frac{(d_1\ldots d_k)^{1/2}}{n}\Big(\frac{d_1}{n}\Big)^{1/2}\|\bA\|_{\rm F}^2.
\end{eqnarray*}
Moreover, 
\begin{eqnarray*}
&&\bigg\| \sum_{i_j\in[d_j], 1\leq j\leq k}\EE\bigg(\frac{A(i_1,\ldots,i_k)\Delta(i_1,\ldots,i_k)}{P(i_1,\ldots,i_k)}-A(i_1,\ldots,i_k)\bigg)^2\bM_j\bE_{i_1(i_2\ldots i_k)}^\top\bE_{i_1(i_2\ldots i_k)}\bM_j^\top\bigg\|\\
&\leq& \max_{i_j\in[d_j], 2\leq j\leq k}\sum_{i_1: (i_1,\ldots,i_k)\in\Omega_1\cup \Omega_2}\frac{A^2(i_1,\ldots,i_k)}{P(i_1,\ldots,i_k)}\|\bM_j\|^2\\
&\leq& \frac{d_1}{n}\|\bA\|_{\rm F}^2\sigma_{\max}^2(\bM_j).
\end{eqnarray*}
Similarly,
\begin{eqnarray*}
&&\bigg\| \sum_{i_j\in[d_j], 1\leq j\leq k}\EE\bigg(\frac{A(i_1,\ldots,i_k)\Delta(i_1,\ldots,i_k)}{P(i_1,\ldots,i_k)}-A(i_1,\ldots,i_k)\bigg)^2\bE_{i_1(i_2\ldots i_k)}\bM_j^\top\bM_j\bE_{i_1(i_2\ldots i_k)}^\top\bigg\|\\
&\leq& \bigg(\max_{i_j\in[d_j], 2\leq j\leq k}\sum_{i_1: (i_1,\ldots,i_k)\in\Omega_1\cup \Omega_2}A^2(i_1,\ldots,i_k)\bigg)\bigg(\max_{i_1\in[d_1]}\sum_{i_j\in[d_j], 2\leq j\leq k: (i_1,\ldots,i_k)\in\Omega_1\cup \Omega_2}\frac{A^2(i_1,\ldots,i_k)}{P(i_1,\ldots,i_k)}\bigg)\\
&\leq& \frac{d_1d_2\ldots d_k}{n^2}\|\bA\|_{\rm F}^4.
\end{eqnarray*}
By matrix Bernstein inequality \citep{tropp2012user}, if $n\geq C_1(d_1\ldots d_k)^{1/2}(t+\log d_{\max})$, then with probability at least $1-e^{-t}$ such that
\begin{eqnarray*}
\big\|\bM_j\big(\hat\bM_j^{(2)}-\bM_j\big)^\top\big\|\\
\leq C_2\|\bA\|_{\rm F}\bigg(\sigma_{\max}(\bM_j)\sqrt{\frac{d_1(t+\log d_{\max})}{n}}+\|\bA\|_{\rm F}\frac{(d_1\ldots d_k)^{1/2}(t+\log d_{\max})}{n}\bigg).
\end{eqnarray*}
Denote this event by $\calE_5$. Clearly, an identical bound holds for $\big\|\big(\hat\bM_j^{(1)}-\bM_j\big)\bM_j^\top\big\|$ with the same probability. Denote this event by $\calE_6$. 
\paragraph*{Finalize the proof of Theorem~\ref{thm:SVD2}.} On event $\calE_1\cap \calE_2\cap \calE_3\cap \calE_4\cap \calE_5\cap \calE_6$, if $n\geq C_1(d_1\ldots d_k)^{1/2}(t+\log d_{\max})$,
there exists a constant $C_2$ depending on $k$ only such that
\begin{eqnarray*}
\big\|\hat\bM_j^{(1)}\big(\hat\bM_j^{(2)}\big)^{\top}-\bM_j\bM_j^\top\big\|\\
\leq C_2\|\bA\|_{\rm F}\bigg(\sigma_{\max}(\bM_j)\sqrt{\frac{d_1(t+\log d_{\max})}{n}}+\|\bA\|_{\rm F}\frac{(d_1\ldots d_k)^{1/2}(t+\log d_{\max})}{n}\bigg),
\end{eqnarray*}
which concludes the proof by adjusting the constant $C_2$ and applying Davis-Kahan Theorem.
\end{proof}

\bibliographystyle{plainnat}
\bibliography{refer}

\appendix
\section{Technical Lemmas}
\subsection{Proof of Lemma~\ref{lemma:chernoff}}
Clearly, for any $t$ and $\lambda>0$, 
\begin{eqnarray*}
\PP\Big(\sum_{j=1}^n(X_j-p_j)\geq t\Big)&=&\PP\Big(\exp\Big\{\lambda\sum_{j=1}^n(X_j-p_j)\Big\}\geq \exp\big\{\lambda t\big\}\Big)\\
&\leq& e^{-\lambda t}\EE\exp\Big\{\lambda\sum_{j=1}^n(X_j-p_j)\Big\}\\
&\leq& e^{-\lambda t}\prod_{j=1}^n \EE e^{\lambda(X_j-p_j)}\\
&\leq& e^{-\lambda t}\prod_{j=1}^n \big(p_je^{\lambda(1-p_j)}+(1-p_j)e^{-\lambda p_j}\big).
\end{eqnarray*}
Note that $e^x\leq 1+x+x^2$ for any $x\in[-1,1]$. Then,
$$
p_je^{\lambda(1-p_j)}+(1-p_j)e^{-\lambda p_j}\leq 1+\lambda^2p_j(1-p_j)\leq e^{\lambda^2p_j(1-p_j)}. 
$$
Therefore, we obtain
\begin{eqnarray*}
\PP\Big(\sum_{j=1}^n(X_j-p_j)\geq t\Big)\leq e^{-\lambda t}\prod_{j=1}^n e^{\lambda^2p_j(1-p_j)}=\exp\Big\{-\lambda t+\lambda^2\sum_{j=1}^n p_j(1-p_j)\Big\}.
\end{eqnarray*}
By choosing $\lambda=t/2\sum_{j=1}^np_j(1-p_j)$, we end up with
$$
\PP\Big(\sum_{j=1}^n(X_j-p_j)\geq t\Big)\leq \exp\Big\{-t^2/4\sum_{j=1}^np_j(1-p_j)\Big\}.
$$
The proof is closed after choosing $t=2s\sqrt{\sum_{j=1}^np_j(1-p_j)}$ for $s\geq 0$.

\subsection{Proof of Lemma~\ref{lemma:Dvsqent}}
The proof follows from the same argument as that for Lemma 12 of \cite{yuan2016tensor}. More specifically, denote the aspect ratio for a block $A_1\times\ldots A_k\subset [d_1]\times\ldots\times [d_k]$,
$$
h(A_1\times \ldots\times A_k)=\min\Big\{\nu: |A_j|^2\leq \nu\prod_{j=1}^k|A_j|, j=1,2,\ldots,k\Big\}.
$$
We bound the entropy of a single block. Let
\begin{eqnarray*}
\mathfrak{D}_{\nu, \ell}^{\rm (block)}=\Big\{\sgn(u_1(a_1))\ldots\sgn(u_k(a_k)){\bf 1}\big\{(a_1,\ldots,a_k)\in A_1\times\ldots\times A_k\big\}:\\
h(A_1\times\ldots A_k)\leq \nu, \prod_{j=1}^k|A_j|=\ell\Big\}.
\end{eqnarray*}
By definition, we obtain
$$
\max\big(|A_1|^2,\ldots, |A_k|^2\big)\leq \nu|A_1||A_2|\ldots |A_k|\leq \nu\ell.
$$
By dividing $\mathfrak{D}_{\nu,\ell}^{(\rm block)}$ into subsets according to $(\ell_1,\ldots,\ell_k)=(|A_1|,\ldots,|A_k|)$, we find
$$
\big|\mathfrak{D}_{\nu,\ell}^{\rm (block)}\big|\leq \sum_{\ell_1\ldots\ell_k=\ell, \max_{j}\ell_j\leq \sqrt{\nu\ell}} 2^{\ell_1+\ldots+\ell_k}{d_1\choose \ell_1}\ldots{d_k\choose \ell_k}.
$$
By the Stirling formula, for $j=1,2,\ldots,k$,
$$
{d_j \choose \ell_j}\leq \frac{d_j^{\ell_j}}{(\ell_j!)}\leq \Big(\frac{d_j}{\ell_j}\Big)^{\ell_j}e^{\ell_j}\frac{1}{\sqrt{2\pi \ell_j}},
$$
then
$$
\log\Big[\sqrt{2\pi\ell_j}2^{\ell_j}{d_j\choose \ell_j}\Big]\leq \ell_j L(\ell_j, 2d_{\max})\leq \sqrt{\nu\ell}L(\sqrt{\nu\ell},2d_{\max})
$$
where $L(x,y):=\max\{1,\log(ey/x)\}$. Let $\ell=\prod_{j=1}^{m}p_j^{v_j}$ with distinct prime factors $p_j$. Since $(v_j+1)v_j/(2p_j^{v_j/2})$ is upper bounded by $2.66$ for $p_j=2$, by $1.16$ for $p_j=3$ and by $1$ for $p_j\geq 5$,
we get
\begin{eqnarray*}
\big|\big\{(\ell_1,\ldots,\ell_k): \ell_1\ldots\ell_k=\ell\big\}\big|&=&\prod_{j=1}^m{v_j+1\choose k-1}\\
&\leq& \prod_{j=1}^{m}{v_j+1\choose 2}^{k/2}\\
&\leq& (2.66\times 1.16)^{k/2}(\sqrt{\ell})^{k/2}\\
&\leq& \prod_{j=1}^k\big(2\sqrt{2\pi\ell_j}\big)^{k/2},\qquad \forall \prod_{j=1}^k\ell_j=\ell.
\end{eqnarray*}
Therefore,
\begin{eqnarray*}
\big|\mathfrak{D}_{\nu,\ell}^{\rm(block)}\big|&\leq& \frac{\exp\Big(k\sqrt{\nu \ell}L(\sqrt{\nu\ell}, 2d_{\max})\Big)}{\prod_{j=1}^k\sqrt{2\pi \ell_j}}\prod_{j=1}^k\big(2\sqrt{2\pi \ell_j}\big)^{k/2},\quad \forall (\ell_1\ldots\ell_k)=\ell\\
&\leq& 2^{k^2/2}(2\pi)^{k(k-2)/4}\ell^{(k-2)/4}\exp\Big(k\sqrt{\nu\ell} L(\sqrt{\nu \ell}, 2d_{\max})\Big)\\
&\leq& 2^{k^2/2}(2\pi)^{k(k-2)/4}\exp\Big(2k\sqrt{\nu\ell}L(\sqrt{\nu\ell}, 2d_{\max})\Big).
\end{eqnarray*}

Due to the constraint $b_1+b_2+\ldots+b_k=s$ in defining $\mathfrak{B}^{\star}_{\nu,m_{\star}}$, for any $\bY\in\mathfrak{B}^{\star}_{\nu,m_{\star}}$, $\bD_{s}(\bY)$ is composed of at most $i^{\star}:={s+k-1\choose k-1}$ blocks. Since the sum of the sizes of the blocks is bounded by $2^{q}$, we obtain
\begin{eqnarray*}
\big|\mathfrak{D}_{\nu,s,q}\big|&\leq& \sum_{\ell_1+\ldots+\ell_{i^{\star}}\leq 2^{q}} \prod_{i=1}^{i^{\star}}\big|\mathfrak{D}_{\nu,\ell_i}^{\rm(block)}\big|\\
&\leq& \sum_{\ell_1+\ldots+\ell_{i^{\star}}\leq 2^{q}} (2\pi)^{i^{\star}k(k-2)/4}2^{i^{\star}k^2/2}\exp\Big(2k\sum_{i=1}^{i^{\star}}\sqrt{\nu\ell_i}L(\sqrt{\nu\ell_i}, 2d_{\max})\Big)\\
&\leq& 2^{i^{\star}k^2/2}(2^q)^{i^{\star}}(2\pi)^{i^{\star}k(k-2)/4}\max_{\ell_1+\ldots+\ell_{i^{\star}}\leq 2^q}\exp\Big(2k\sum_{i=1}^{i^{\star}}\sqrt{\nu\ell_i}L(\sqrt{\nu\ell_i}, 2d_{\max})\Big).
\end{eqnarray*}
As shown in \cite{yuan2016tensor}, $\sum_{i=1}^{i^{\star}}\sqrt{\ell_i}L(\sqrt{\nu\ell_i},2d_{\max})\leq \sqrt{i^{\star}2^q}\big(L(\sqrt{\nu2^q},2d_{\max})+\log(\sqrt{i^{\star}})\big)$, we obtain
\begin{eqnarray*}
\log \big|\mathfrak{D}_{\nu,s,q}\big|\leq i^{\star}\log(2^q)+i^{\star}k(k-2)/2+i^{\star}k^2/2+2k\sqrt{i^{\star}\nu2^q}L\big(\sqrt{\nu 2^q}, 2d_{\max}\sqrt{i^{\star}}\big).
\end{eqnarray*}
Since $i^{\star}={s+k-1\choose k-1}\leq s^k$, it follows that 
\begin{eqnarray*}
\log \big|\mathfrak{D}_{\nu,s,q}\big|\leq qs^k\log2+2k^2s^{k}\sqrt{\nu 2^q} L\big(\sqrt{\nu 2^q}, d_{\max}s^{k/2}\big).
\end{eqnarray*}

\end{document}